\documentclass[11pt]{article}
%

\usepackage[utf8]{inputenc}
\usepackage{amsmath}
\usepackage{mathtools}
\usepackage{amsfonts}
\usepackage{amssymb}
\usepackage{amsthm}
\usepackage[noabbrev]{cleveref}
\usepackage{comment}
\usepackage{algorithm}
\usepackage{algpseudocode}
\usepackage[nodisplayskipstretch]{setspace}
\usepackage{url}
\usepackage{cite}
\usepackage{enumitem}  
\usepackage[normalem]{ulem}

\usepackage{graphics} 
\usepackage{epsfig} 
\usepackage{times} 
\usepackage[margin=1in]{geometry}
\usepackage[numbers,sort,compress]{natbib}

\hyphenation{op-tical net-works semi-conduc-tor}

\newtheorem{lemma}{Lemma}

\newtheorem{assumption}{Assumption}
\newtheorem{remark}{Remark}

\usepackage{xcolor}

\begin{document}

\title{Distributed and Localized Model Predictive Control. Part II: Theoretical Guarantees}
%
%

\author{Carmen~Amo~Alonso,
~ Jing Shuang (Lisa) Li,
~ Nikolai~Matni,
~and~James~Anderson
\thanks{C. Amo Alonso and J.S. Li are with the Computing and Mathematical Sciences Department at California Institute of Technology, Pasadena, CA, 91106, USA. N. Matni is with the Department of Electrical and Systems Engineering at the University of Pennsylvania, Philadelphia, PA, 19104, USA. J. Anderson is with the Department of Electrical Engineering and the Data Science Institute at Columbia University, New York, NY, 10027, USA. {\tt\small \{camoalon,jsli\}@caltech.edu}, {\tt\small nmatni@seas.upenn.edu}, {\tt\small james.anderson@columbia.edu}}%
}

\markboth{}
{Amo Alonso \MakeLowercase{\textit{et al.}}: Distributed and Localized Model Predictive Control: Theoretical Guarantees}

\maketitle




\begin{abstract}

Engineered cyberphysical systems are growing increasingly large and complex. These systems require scalable controllers that robustly satisfy state and input constraints in the presence of additive noise -- such controllers should also be accompanied by theoretical guarantees on feasibility and stability. In our companion paper, we introduced Distributed and Localized Model Predictive Control (DLMPC) for large-scale linear systems; DLMPC is a scalable \emph{closed-loop} MPC scheme in which subsystems need only exchange local information in order to synthesize and implement local controllers. In this paper, we provide recursive feasibility and asymptotic stability guarantees for DLMPC. We leverage the System Level Synthesis framework to express the maximal positive robust invariant set for the closed-loop system and its corresponding Lyapunov function, both in terms of the closed-loop system responses. We use the invariant set as the terminal set for DLMPC, and show that this guarantees feasibility with minimal conservatism. We use the Lyapunov function as the terminal cost, and show that this guarantees stability. We provide fully distributed and localized algorithms to compute the terminal set offline, and also provide necessary additions to the online DLMPC algorithm to accommodate coupled terminal constraint and cost. In all algorithms, only local information exchanges are necessary, and computational complexity is independent of the global system size -- we demonstrate this analytically and experimentally. This is the first distributed MPC approach that provides minimally conservative yet fully distributed guarantees for recursive feasibility and asymptotic stability, for both nominal and robust settings.

\end{abstract}



\section{Introduction}

Model Predictive Control (MPC) enjoys widespread success across diverse applications. Ensuring recursive feasibility and asymptoptic stability for MPC is a well-studied topic in the centralized setting \cite{borrelli_predictive_2017}, and sufficient conditions based on terminal sets and cost functions have been established \cite{mayne_robust_2005}. Porting these ideas to distributed systems is a challenging task, both theoretically and computationally. High computational demand, limited and local communication, and coupling among subsystems prevents the use of techniques from the centralized setting. Thus, efforts have been made to develop theoretical guarantees for distributed MPC.

\textbf{Prior work:} The majority of distributed MPC approaches rely on the use of distributed terminal costs and terminal sets to provide theoretical guarantees. In order to obtain structure in the terminal cost, standard methods rely on Lyapunov stability results, often combined with relaxation techniques to make them amenable to distributed settings \cite{langbort_distributed_2004,jokic_decentralized_2009,zecevic_control_2010}. For terminal sets, proposed distributed methods are often limited by the coupling among subsystems -- this often leads to small terminal sets  that result in too conservative solutions (see for example \cite{stewart_cooperative_2010,maestre_distributed_2011} and references therein). In order to overcome these issues, several approaches have recently been proposed to synthesize structured terminal sets with adaptive properties, i.e. local terminal sets defined as the sub-level set of a structured Lyapunov function, which change at each iteration in order to avoid unnecessary conservatism \cite{conte_distributed_2016,trodden_distributed_2017,darivianakis_distributed_2020,aboudonia_distributed_2020,muntwiler_distributed_2020,wang_robust_2021}. These approaches successfully design structured robust positive invariant sets and reduce conservatism; however, they require online updates of the terminal set at each MPC iteration, which increase the controller's overall computational complexity and communication overhead. Moreover, the imposed structure unavoidably leads to a possibly small approximation of the maximal control invariant set (the least conservative option for a terminal set). To move away from structured sets and costs, a data-driven approach was recently developed in \cite{sturz_distributed_2020}, where locally collected data are used to construct local control invariant sets and costs that provide guarantees. However, this approach is mainly limited to iterative control tasks, and conservatism of the terminal cost and set only reduces asymptotically as the system collects data. Online computation and refinements of the terminal set are also key to this approach. 

Given the state of the art, our goal is to design a distributed MPC approach with \emph{distributed} and \emph{minimally conservative} feasibility and stability guarantees. We seek a distributed MPC algorithm with (i) a maximal positive invariant terminal set, and (ii) a fully distributed and scalable \textit{offline} algorithm to compute this set. This will allow us to use the associated Lyapunov function of the terminal set as the terminal cost, and requires no explicit a priori structural assumptions. No method satisfying these requirement currently exists in the literature. 

\textbf{Contributions:} We provide theoretical guarantees for the Distributed Localized MPC (DLMPC) for linear time-invariant systems approach presented in our companion paper \cite{amoalonso_implementation_2021}. We show that the \emph{maximal} positive invariant set of the closed-loop system can be expressed in terms of the closed-loop system responses  as defined in the System Level Synthesis (SLS) framework \cite{wang_system_2019,anderson_system_2019}. We show that when the closed-loop system is localized, the set is naturally structured without requiring additional assumptions. We also show that this set can be used to provide recursive feasibility guarantees when used as the terminal set of the system, and stability guarantees when combined with its associated global Lyapunov function \cite{blanchini_set_1999}. We provide a fully distributed and localized offline algorithm for computation of the terminal set -- this algorithm requires only local information exchange between subsystems. We also provide necessary additions to the original DLMPC algorithm to accommodate coupled terminal constraint and cost. In particular, this can be done by using a nested Alternating Direction Method of Multipliers (ADMM)-based consensus algorithm. In the resulting implementation, each sub-controller first solves for its local portion of the terminal set, offline, then solves a local online MPC problem. Throughout all algorithms, only local information exchanges within some local neighborhood are necessary, and computational complexity is independent of the global system size. The presented approach applies to the nominal case as well as additive polytopic or locally norm-bounded disturbances. This approach is the first to compute without approximation the maximal positive invariant set and its associated global Lyapunov function in a fully distributed and localized manner. Through numerical experiments, we validate these results and further confirm the minimal conservatism introduced by this method.

\textbf{Paper structure:} In \S II we present the problem formulation and briefly summarize essential concepts from our companion paper \cite{amoalonso_implementation_2021}. In \S III, we formulate the maximal robust positive invariant set and its associated Lyapunov function in closed-loop coordinates, and use this to provide recursive feasibility and stability guarantees for DLMPC; we also discuss convergence guarantees for DLMPC. In \S IV we provide an offline algorithm to distributedly and locally compute the terminal set, and a modified online DLMPC algorithm that accommodates the terminal set and cost, including any local coupling. In \S V, we present a numerical study and we end in \S VI with conclusions and directions of future work.
 
\textbf{Notation:} Lower-case and upper-case Latin and Greek letters such as $x$ and $A$ denote vectors and matrices respectively, although lower-case letters might also be used for scalars or functions (the distinction will be apparent from the context). Bracketed indices denote time-step of the real system, i.e., the system input is $u(t)$ at time $t$, not to be confused with $x_t$ which denotes the predicted state $x$ at time $t$. Superscripted variables, e.g. $x^k$, correspond to the value of $x$ at the $k^{th}$ iteration of a given algorithm. Square bracket notation, i.e., $[x]_{i}$ denotes the components of $x$ corresponding to subsystem $i$. Calligraphic letters such as $\mathcal{S}$ denote sets, and lowercase script letters such as $\mathfrak{c}$ denote a subset of $\mathbb{Z}^{+}$, e.g. $\mathfrak{c}=\left\{1,...,n\right\}\subset\mathbb{Z}^{+}$.  Boldface lower and upper case letters such as $\mathbf{x}$ and $\mathbf{K}$ denote finite horizon signals and block lower triangular (causal) operators, respectively:
\begin{equation*} 
\mathbf{x}=\left[\begin{array}{c} x_{0}\\x_{1}\\\vdots\\x_{T}\end{array}\right], ~
K =   { {\scriptscriptstyle{\left[\begin{array}{cccc}K_{0}[0] & & & \\ K_{1}[1] & K_{1}[0] & & \\ \vdots & \ddots & \ddots & \\ K_{T}[T] & \dots & K_{T}[1] & K_{T}[0] \end{array}\right]}}},
\end{equation*}
where each $x_i$ is an $n$-dimensional vector, and each $K_{i}[j]$ is a matrix of compatible dimension representing the value of $K$ at the $j^\text{th}$ time-step computed at time $i$. $\mathbf{K}(\mathfrak{r},\mathfrak{c})$ denotes the submatrix of $\mathbf{K}$ composed of the rows and columns specified by $\mathfrak{r}$ and $\mathfrak{c}$ respectively. We denote the block columns of  $\mathbf K$ by $\mathbf K\{1\}$,...,$\mathbf K\{T\}$, i.e. $\mathbf K\left\{1\right\}:=[K_{0}[0]^{\mathsf{T}}\ \dots\ K_{T}[T]^{\mathsf{T}}]^{\mathsf{T}}$, and we use $:$ to indicate the range of columns, i.e. $\mathbf K\left\{2:T\right\}$ contains the block columns from the second to the last. For compactness, we also define $Z_{AB}:=\begin{bmatrix} I-Z\hat A & -Z\hat B \end{bmatrix}$ where $\hat A:=\mathrm{blkdiag}(A,...,A)$ and $\hat B:=\mathrm{blkdiag}(B,...,B,0)$ for the dynamics matrices $A$ and $B$, and $Z$ is the block-downshift matrix.

\section{Problem Formulation}

We begin with a brief summary of the DLMPC formulation, approach, and algorithm (for details, refer to our companion paper \cite{amoalonso_implementation_2021}), then formally introduce the problem of providing theoretical guarantees -- recursive feasibility and asymptotic stability -- in the DLMPC scheme. We then introduce the open question that we resolve in this paper. 

\textbf{Setup:}
Consider a discrete-time linear time invariant (LTI) system
\begin{equation} \label{eqn:system}
x(t+1) = Ax(t)+Bu(t)+w(t),
\end{equation}
where $x(t)\in\mathbb{R}^{n}$ is the state, $u(t)\in\mathbb{R}^{p}$ is the control input, and $w(t)\in\mathcal W \subset \mathbb{R}^{n}$ is an exogenous disturbance at time $t$. System \eqref{eqn:system} can be interpreted as $N$ interconnected subsystems. Each subsystem is equipped with a sub-controller. We model the interconnection topology as an unweighted directed graph $\mathcal{G}_{(A,B)}(\mathcal E,\mathcal V)$, where each subsystem $i$ is identified with a vertex $v_{i}\in \mathcal V$ and an edge $e_{ij}\in \mathcal E$ exists whenever $[A]_{ij}\neq 0$ or $[B]_{ij}\neq 0$.  

We impose that information exchange between sub-controllers -- as defined by the graph $\mathcal{G}_{(A,B)}$ -- is confined to a subset of neighboring sub-controllers. We use the $d$-local communication constraints \cite{wang_separable_2018} to formalize this idea. Each subsystem $i$:
\begin{itemize}
\item Receives information from its \textit{d-incoming set} $\textbf{in}_{i}(d) := \left\{v_{j}\ |\ \textbf{dist}(v_{j} \rightarrow v_{i} ) \leq d\in\mathbb{N} \right\}$, and
\item Sends information to its \textit{d-outgoing set} $\textbf{out}_{i}(d) := \left\{v_{j}\ |\  \textbf{dist}(v_{i} \rightarrow v_{j} ) \leq d\in\mathbb{N} \right\}$.
\end{itemize}

We use a model predictive controller to determine the control input; at time step $\tau$, the controller solves:
\begin{align} \label{eqn:MPC}
& \underset{{x}_{t},u_{t}, \gamma_t}{\text{min}} &  &\sum_{t=0}^{T-1}f_{t}(x_{t},u_{t})+f_{T}(x_{T})\\
& \ \text{s.t.} &  &\begin{aligned} \nonumber
&x_{0} = x(\tau),\ x_{t+1} = Ax_{t}+Bu_{t}+w_t, \\
&x_{T}\in\mathcal{X}_{T},\, x_{t}\in\mathcal{X}_{t},\, u_{t}\in\mathcal{U}_{t}\ \forall w_t\in\mathcal{W}_t, \\
&u_{t} = \gamma_t(x_{0:t},u_{0:t-1}),\ t=0,...,T-1.
\end{aligned}
\end{align}
To provide tractability, $f_t(\cdot,\cdot)$ and $f_T(\cdot)$ are assumed to be closed, proper, and convex, and $\gamma_t(\cdot)$ is a measurable function of its arguments. The sets $\mathcal{X}_t$ and $\mathcal{U}_t$ are assumed to be closed and convex sets containing the origin for all $t$. For simplicity, we will consider constant state and input constraint sets $\mathcal{X}$ and $\mathcal{U}$.

Note that \eqref{eqn:MPC} does not include the $d$-local communication constraints. It is not possible to introduce such constraints in a convex manner in the classical MPC formulation \eqref{eqn:MPC}. However, the DLMPC formulation allows to incorporate locality constraints into the MPC formulation in a straightforward manner with the only requirement that the MPC formulation be compatible with the locality constraints.(see our companion paper \cite{amoalonso_implementation_2021} for details). Hence, we assume that if two subsystems are coupled through either the constraints or cost, then the two subsystems must be in the $d$-incoming and $d$-outgoing set from one another:
\begin{assumption}{\label{assump:locality}}
Given an MPC problem \eqref{eqn:MPC} over system \eqref{eqn:system}, the objective function $f$ is such that $f(x,u)=\sum f^i([x]_{\textbf{in}_{i}(d)},[u]_{\textbf{in}_{i}(d)})$ for $f^i$ local functions; and the constraint sets are such that $x\in\mathcal{X}=\mathcal{X}^1\cap ... \cap \mathcal{X}^N$, where $x \in \mathcal{X}$ if and only if $[x]_{\textbf{in}_{i}(d)}\in\mathcal{X}^i$ for all $i$, and idem for $\mathcal{U}$ for the local sets $\mathcal{X}^i$ and $\mathcal{U}^i$.
\end{assumption}

\textbf{Approach:}
In \cite{amoalonso_implementation_2021}, we use the SLS framework to reformulate the MPC problem \eqref{eqn:MPC} into the DLMPC problem. This allows for distributed and localized synthesis and implementation, i.e., each subsystem requires only local information to synthesize its local sub-controller and determine the local control action. This is made possible by imposing appropriate $d$-local structural constraints $\mathcal{L}_d$ on the \emph{closed-loop system responses} of the system $\mathbf \Phi_x$ and $\mathbf \Phi_u$, which become the decision variables of the MPC problem. 

The DLMPC subroutine over time horizon $T$ at time $\tau$ is as follows:
\begin{align}
& \underset{\mathbf{\Phi}}{\text{min}} & f(\mathbf{\Phi}\{1\}x_{0}) + f_T(\mathbf{\Phi}\{1\}x_{0}) \label{eqn:dlmpc} \hspace{1.2cm}\\
& \text{~s.t.} &\begin{aligned} & Z_{AB}\mathbf{\Phi}=I, ~ x_0 = x(\tau), ~ \mathbf{\Phi}\in\mathcal L_d,\\
					    & \mathbf{\Phi}\mathbf w\in \mathcal{P}, ~ \mathbf\Phi_{x,T}\mathbf w\in\mathcal X_T, ~\forall\mathbf w \in\mathcal W,
       		     \end{aligned} \nonumber
\end{align}
where $\mathbf \Phi := \begin{bmatrix}\mathbf\Phi_x^\intercal & \mathbf\Phi_u^\intercal \end{bmatrix}^\intercal$, $f_t(\cdot,\cdot)$ and $f_T(\cdot)$ are closed, proper, and convex cost functions, and $\mathcal P$ is defined so that $\mathbf{\Phi}\mathbf{w}\in\mathcal{P}$ if and only if $\mathbf{x}\in\mathcal{X},\text{ and } \mathbf u\in\mathcal{U}$. By convention, we define the disturbance to contain the initial condition, i.e., $\mathbf{w} = [x_{0}^\intercal\ w_{0}^\intercal\ \dots \ w_{T-1}^\intercal]=:[x_{0}^\intercal\ {\boldsymbol{\delta}^\intercal}]$ and $\mathcal W$ is defined over $\boldsymbol{\delta}$ so that it does not restrict $x_0$. In the nominal case, i.e., $w_t=0\ \forall t$, $\mathcal{P}$ and $\mathcal{X}_T$ are closed and convex sets containing the origin. When noise is present, we restrict ourselves to polytopic sets only: $\mathcal P:= \{[\mathbf{x}^\intercal \ \mathbf{u}^\intercal]^\intercal:\ H[\mathbf{x}^\intercal \ \mathbf{u}^\intercal]^\intercal\leq h\}$, and consider different options for set $\mathcal{W}$: 
\begin{itemize}
\item Polytopic set: $\boldsymbol \delta \in\{\boldsymbol \delta :\ G\boldsymbol \delta \leq g\}$.
\item Locally norm-bounded: $[\boldsymbol \delta]_i\in\{ \boldsymbol \delta \, : \, \| \boldsymbol \delta \|_p \leq \sigma\}$  $\forall i=1,...,N$ and $p\geq 1$.
\end{itemize}

Although \eqref{eqn:dlmpc} solves a distributed control problem, the optimization problem \eqref{eqn:dlmpc} is itself centralized. To solve \eqref{eqn:dlmpc} in a distributed and localized manner, we perform a variable duplication in order to apply the ADMM distributed optimization technique \cite{boyd_distributed_2010}:
\begin{align}
& \underset{\mathbf{\tilde\Phi},\mathbf{\tilde\Psi}}{\text{min}} & f(M_1\mathbf{\tilde \Phi}\{1\}x_{0})   \label{eqn:dlmpc_admm} & ~\\
&  \text{s.t.} & Z_{AB}M_2\mathbf{\tilde \Psi}=I, & ~ x_0 = x(\tau), ~ \mathbf{\tilde\Phi},\mathbf{\tilde \Psi}\in\mathcal{L}_d, \nonumber \\
& ~ &\mathbf{\tilde\Phi}x_0\in\mathcal{\tilde P},\ \mathbf{\tilde \Phi}&=\tilde H\mathbf{\tilde\Psi} \nonumber
\end{align}
where variables $\mathbf{\tilde \Phi}$ and $\mathbf{\tilde \Psi}$ are extensions of the system response $\mathbf\Phi$ to account for the robust case and $M_1,\ M_2,\ \tilde H$ are auxiliary matrices.\footnote{Their actual definition depends on whether the system experiences no disturbances, locally bounded disturbances, or polytopic disturbances; they are defined in \S IV of \cite{amoalonso_implementation_2021}.} We apply ADMM to solve \eqref{eqn:dlmpc_admm} in a distributed and localized manner, as shown in Algorithm \ref{alg:dlmpc}: 
\setlength{\textfloatsep}{0pt}
\begin{algorithm}[ht]
\caption{Subsystem $i$ DLMPC implementation}\label{alg:dlmpc}
\begin{algorithmic}[1]
\State Measure local state $[x(\tau)]_{i}$ and exchange with neighbors in $\textbf{out}_{i}(d)$. Set $k\leftarrow0$. 
\State Share the measurement with neighbors in $\textbf{out}_{i}(d)$.
\State Solve for local rows of $\mathbf{\tilde\Phi}^{k+1}$ via (11a) in \cite{amoalonso_implementation_2021}.
\State Exchange rows of $\mathbf \Phi$ with $d$-local neighbors. 
\State Solve for local columns of $\mathbf{\tilde\Psi}^{k+1}$ via (11b) in \cite{amoalonso_implementation_2021}.
\State Exchange rows of $\mathbf \Phi$ with $d$-local neighbors. 
\State Perform the multiplier update step via (11c) in \cite{amoalonso_implementation_2021}.
\State \textbf{if} ADMM has converged\textbf{:}
\Statex $\;\;$ Apply $[u_0]_i = [\Phi_{u,0}[0]]_{i}[x_0]_{i}$. Return to step 1.
\Statex \textbf{else:} 
\Statex $\;\;$ Set $k\leftarrow k+1$. Return to step 3.
\end{algorithmic}
\end{algorithm}

\textbf{Problem statement:}
The DLMPC approach \eqref{eqn:dlmpc_admm} introduced in \cite{amo_alonso_distributed_2020,amo_alonso_robust_2021} lacks feasibility, stability, and convergence guarantees -- our goal is to provide these. Also, \eqref{eqn:dlmpc_admm} does not explicitly consider a terminal set $\mathcal X_T$ and terminal cost $f_T$ -- we want to leverage these to provide the aforementioned feasibility and stability guarantees. Additionally, Algorithm \ref{alg:dlmpc} was developed under the assumption that subsystems are not coupled. We want to augment the algorithm to accommodate coupling as per Assumption \ref{assump:locality}. In the remainder of this paper, we address all of these problems; we provide theoretical guarantees by selecting an appropriate terminal set and cost, and present distributed and localized algorithms that perform the necessary computations and accommodate coupling.

\section{Feasibility and Stability Guarantees}\label{sec:guarantees}

Theoretical guarantees for the DLMPC problem \eqref{eqn:dlmpc} are now derived. First, we describe a maximal positive invariant set using an SLS-style parametrization; recursive feasibility for DLMPC is guaranteed by using this set as the terminal set. We also use this set to construct a terminal cost to guarantee asymptotic stability for the nominal setting and input-to-state stability (ISS) for the robust setting. Convergence results from the ADMM literature are used to establish convergence guarantees.

\subsection{Feasibility guarantees}

Recursive feasibility guarantees for the DLMPC problem \eqref{eqn:dlmpc} are given by the following lemma:

\begin{lemma}\label{lemm:terminal_set}
Let the terminal set $\mathcal X_T$ for the DLMPC problem \eqref{eqn:dlmpc} be of the form 
\begin{equation}\label{eqn:terminal_set}
\mathcal{X}_T:=\{x_0\in\mathbb R^n:\ \mathbf \Phi \begin{bmatrix} x_0^\intercal & \boldsymbol\delta^\intercal \end{bmatrix}^\intercal \in \mathcal P\ \forall \boldsymbol\delta\in\mathcal W\},
\end{equation} 
for some $\mathbf\Phi$ satisfying $Z_{AB}\mathbf{\Phi} = I$. Then recursive feasibility is guaranteed for the DLMPC problem \eqref{eqn:dlmpc}. Moreover, $\mathcal X_T$ is the maximal robust positive invariant set for the closed-loop described by $\mathbf \Phi$.
\end{lemma}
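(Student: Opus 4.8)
The two assertions are tightly linked, and my plan is to establish the invariance and maximality characterization of $\mathcal{X}_T$ first, then deduce recursive feasibility from it, since terminal-set invariance is precisely what the feasibility argument consumes. Throughout, the governing observation is that \eqref{eqn:terminal_set} is defined so that $x_0\in\mathcal{X}_T$ is equivalent to the \emph{entire} closed-loop trajectory generated by $\mathbf{\Phi}$ being robustly admissible, which is exactly the membership test an invariant set must encode.

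For invariance and maximality I would argue directly from \eqref{eqn:terminal_set}. Reading the closed-loop dynamics off the achievability constraint $Z_{AB}\mathbf{\Phi}=I$ (which ties $\mathbf{\Phi}$ to $A$ and $B$ through the block-downshift $Z$), the one-step successor of any state under the closed loop is $x_1 = A_{cl}x_0 + w_0$, with $A_{cl}$ obtained from the first block of $\mathbf{\Phi}$. If $x_0\in\mathcal{X}_T$, then $\mathbf{\Phi}\begin{bmatrix} x_0^\intercal & \boldsymbol\delta^\intercal\end{bmatrix}^\intercal\in\mathcal{P}$ for every $\boldsymbol\delta\in\mathcal{W}$, so the full trajectory is admissible; stripping the first time step and using that $\mathbf{\Phi}$ describes a time-invariant (shift-invariant) closed loop shows the trajectory emanating from $x_1$ is again admissible, i.e.\ $x_1\in\mathcal{X}_T$ for every admissible $w_0$. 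This proves $\mathcal{X}_T$ is robustly positively invariant. For maximality I would take an arbitrary RPI set $\mathcal{S}$ for the same closed loop: every $x_0\in\mathcal{S}$ generates under $\mathbf{\Phi}$ a trajectory that never leaves $\mathcal{P}$, which is exactly the membership condition of \eqref{eqn:terminal_set}, whence $\mathcal{S}\subseteq\mathcal{X}_T$ and no invariant set can contain a point outside $\mathcal{X}_T$.

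Granting invariance, recursive feasibility follows the classical shift-and-append template translated into the SLS parametrization. Assuming \eqref{eqn:dlmpc} is feasible at time $\tau$ with response $\mathbf{\Phi}^\tau$ whose predicted terminal state lies in $\mathcal{X}_T$, I would build a candidate $\hat{\mathbf{\Phi}}$ at time $\tau+1$ by shifting the tail of $\mathbf{\Phi}^\tau$ forward one step and splicing the terminal response $\mathbf{\Phi}$ into the last block column. I would then check that $\hat{\mathbf{\Phi}}$ is feasible for \eqref{eqn:dlmpc} at $\tau+1$: the achievability identity $Z_{AB}\hat{\mathbf{\Phi}}=I$ survives because both the shifted and spliced pieces satisfy $Z_{AB}(\cdot)=I$ individually and $Z_{AB}$ acts blockwise through $Z$; the locality membership $\hat{\mathbf{\Phi}}\in\mathcal{L}_d$ is inherited since shifting and splicing do not enlarge the sparsity pattern; and the robust constraints $\hat{\mathbf{\Phi}}\mathbf{w}\in\mathcal{P}$ and $\hat{\mathbf{\Phi}}_{x,T}\mathbf{w}\in\mathcal{X}_T$ hold because the terminal state of $\mathbf{\Phi}^\tau$ lies in $\mathcal{X}_T$, so the spliced terminal response keeps the appended portion inside $\mathcal{P}$ and returns the final predicted state to $\mathcal{X}_T$, for all $\boldsymbol\delta\in\mathcal{W}$.

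I expect the main obstacle to be this candidate construction: unlike classical MPC, where one shifts state/input sequences directly, here the decision variable is the closed-loop operator $\mathbf{\Phi}$, so the candidate must be assembled at the operator level while simultaneously preserving causality (the block-lower-triangular structure), the achievability identity $Z_{AB}\mathbf{\Phi}=I$, and the locality pattern $\mathcal{L}_d$. The delicate point is aligning the disturbance indexing across the splice, so that robust constraint satisfaction genuinely transfers for \emph{all} $\boldsymbol\delta\in\mathcal{W}$ rather than only for the realized disturbance; this will require explicit bookkeeping of how the block columns $\mathbf{\Phi}\{\cdot\}$ reindex under the one-step shift.
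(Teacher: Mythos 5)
Your proposal is correct in substance, but it takes a genuinely different route from the paper on both halves of the lemma. For invariance and maximality, the paper does not argue directly from the definition \eqref{eqn:terminal_set}; it invokes the iterative pre-set construction of Algorithm 10.4 in \cite{borrelli_predictive_2017}, defining $\mathcal S = \bigcap_{k\geq 0}\mathcal S_k$ with $\mathcal S_0 = \mathcal X$, proves by induction that each $\mathcal S_k$ is the set of initial states whose $k$-step trajectory is robustly admissible for \emph{some} input sequence, and then identifies the limit with \eqref{eqn:terminal_set} by appealing to the fact that $Z_{AB}\mathbf{\Phi}=I$ parametrizes \emph{all} causal linear time-varying controllers, so every admissible input sequence is realized by some achievable $\mathbf{\Phi}$. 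Your direct two-sided argument (membership in $\mathcal X_T$ is exactly robust trajectory admissibility; any robustly positive invariant set generates trajectories that never leave $\mathcal P$, hence is contained in $\mathcal X_T$) is cleaner for the fixed-$\mathbf{\Phi}$ claim the lemma actually makes. For recursive feasibility, the paper performs no operator-level shift-and-append at all: it observes that $\mathcal X_T$, being positively invariant under a feasible closed loop, is in particular \emph{control} invariant, and then cites Theorem 12.1 of \cite{borrelli_predictive_2017}, whose classical trajectory-level proof transfers to the robust setting. Your explicit candidate $\hat{\mathbf{\Phi}}$ is the standard disturbance-feedback shift-and-append and can be made to work (concatenations of achievable responses are achievable), but it proves more than necessary: once control invariance is in hand, feasibility of the shifted trajectory-level policy plus the completeness of the SLS parametrization already yields a feasible $\mathbf{\Phi}$ at time $\tau+1$, with none of the splicing bookkeeping you flag as the main obstacle.

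One step of your invariance argument relies on a hypothesis the lemma does not state: shift-invariance of the closed loop described by $\mathbf{\Phi}$. If $\mathbf{\Phi}$ is a generic finite-horizon achievable response --- which $Z_{AB}\mathbf{\Phi}=I$ permits, and which the paper explicitly contemplates in a footnote distinguishing finite- from infinite-horizon $\mathbf{\Phi}$ --- then stripping the first time step leaves the \emph{tail} of $\mathbf{\Phi}$, a different (shorter, time-varying) response; admissibility of the tail trajectory from $x_1$ does not certify $x_1\in\mathcal X_T$, since membership in $\mathcal X_T$ is tested against $\mathbf{\Phi}$ itself. Your argument is valid when $\mathbf{\Phi}$ is the infinite-horizon response of a time-invariant controller (e.g., the unconstrained optimal closed loop, which is what the paper actually uses to construct $\mathcal X_T$ in \S \ref{sec:implementation}), so you should state that hypothesis explicitly. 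The paper's induction sidesteps this issue entirely, because the intersection $\bigcap_k \mathcal S_k$ quantifies over trajectories of every length with freely chosen inputs and never requires the one-step shift to reproduce the same operator.
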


\begin{proof}

First, we show that $\mathcal X_T$ is the maximal robust positive invariant set. By Algorithm 10.4 in \cite{borrelli_predictive_2017}, the set
$$\mathcal S :=\bigcap_{k=0}^\infty \mathcal S_k,\quad \text{with}\quad \mathcal S_0 = \mathcal X,$$ \vspace{-5mm}
$$\mathcal S_k = \{x\in\mathbb R^n:\ Ax+Bu+w\in\mathcal S_{k-1}\ \forall w\in\mathcal W\}$$
is the maximal robust positive invariant set for the closed-loop system \eqref{eqn:system} with some fixed input $u\in\mathcal U$. We show by induction that $\mathcal S_k$ can also be written as
\begin{equation}\label{eqn:induction}
\mathcal S_k=\{x_0\in\mathbb R^n:\ x_{k}\in\mathcal X\ \forall w_{0:k-1}\in\mathcal W\}
\end{equation}
for some sequence $u_{0:k-1}\in\mathcal U$. The base case at $k=1$ is trivially true, since $\mathcal S_0 = \mathcal X$. For the inductive step, assume that equation \eqref{eqn:induction} holds at $k$. Then, at $k=1$,
\begin{align*}
\mathcal S_{k+1} &= \big\{x\in\mathbb R^n:\ Ax+Bu+w\in\mathcal S_k\ \forall w\in\mathcal W\big\} \\
& = \big\{x_0\in \mathbb R^n:\ x:=Ax_0+Bu+w\ s.t. \\
&\hspace{4mm}A^{k+1}x+\sum_{j=0}^{k}A^j(Bu_j+w)\in\mathcal X\ \forall w\in\mathcal W \big\} \\
&= \big\{x_0\in \mathbb R^n:\ x_{k+1}\in\mathcal X,\ \forall w_{0:k}\in\mathcal W \big\}
\end{align*}
for some sequence of inputs $u_{0:k}\in\mathcal U$.
This implies that $\mathcal S$ can be written as 
$$\mathcal S=\{x_0\in\mathbb R^n:\ \mathbf{x}\in\mathcal X\ \forall w\in\mathcal W\} \text{ for some } \mathbf{u}\in\mathcal U.$$
From the definition of $\mathbf \Phi$, this implies directly that
$$ \mathcal{S}=\{x_0\in\mathbb R^n:\ \mathbf \Phi \begin{bmatrix} x_0^\intercal & \boldsymbol\delta^\intercal \end{bmatrix}\in \mathcal P\ \forall \boldsymbol\delta\in\mathcal W \}:=\mathcal X_T, $$
for some $\mathbf\Phi$ satisfying $Z_{AB}\mathbf\Phi=I$. This constraint automatically enforces that the resulting closed-loop is feasible, i.e. that $\mathbf u = \mathbf{\Phi}_u \begin{bmatrix} x_0^\intercal & \boldsymbol\delta^\intercal \end{bmatrix}$ exists; furthermore, all inputs $\mathbf u$ can be captured by this closed-loop parametrization since it parametrizes a linear time-varying controller over a finite time horizon \cite{anderson_system_2019}. Hence, $\mathcal X_T$ is the maximal robust positive invariant set for the closed-loop system as defined by $\mathbf\Phi$.

Next, we show that imposing $\mathcal X_T$ as the terminal set of the DLMPC problem \eqref{eqn:dlmpc} guarantees recursive feasibility. Notice that by definition, $\mathcal X_T$ is not only a robust positive invariant set but also a robust \textit{control} invariant set. We can directly apply the proofs from Theorem 12.1 in \cite{borrelli_predictive_2017} to the robust setting -- recursive feasibility is guaranteed if the terminal set is control invariant, which $\mathcal X_T$ is. 
\end{proof}

\begin{remark}
Recursive feasibility is guaranteed by a \emph{control} invariant $\mathcal X_T$. In the ideal case, we want $\mathcal X_T$ to be the maximal robust control invariant set, in order to minimize conservatism introduced by $\mathcal X_T$ in \eqref{eqn:dlmpc}. However, this set generally lacks sparsity, violates Assumption \ref{assump:locality}, and is not amenable for inclusion in our distributed and \textit{localized} algorithm. For this reason, we use the maximal robust \emph{positive} invariant set in Lemma \ref{lemm:terminal_set}. Here, the choice of $\mathbf\Phi$ is critical in determining how much conservatism $\mathcal X_T$ will introduce. As suggested in \S 12 of \cite{borrelli_predictive_2017}, we choose $\mathbf\Phi$ corresponding to the unconstrained closed-loop system. In the interests of distributed synthesis and implementation, we additionally enforce $\mathbf\Phi$ to have localized structure. We discuss how this $\mathbf\Phi$ is computed in \S \ref{sec:implementation}, and demonstrate that the resulting terminal set $\mathcal X_T$ introduces no conservatism in \S \ref{sec:simulation}.
\end{remark}

By Assumption \ref{assump:locality}, constraint sets are localized, i.e. $\mathcal{X}=\mathcal{X}^1\cap ... \cap \mathcal{X}^N,$ where $x \in \mathcal{X}$ if and only if $[x]_{\textbf{in}_{i}(d)}\in\mathcal{X}^i$ for all $i$ (and idem for $\mathcal U$). Moreover, we can use the constraint $\mathcal L_d$ to enforce that the system response $\mathbf\Phi$ is localized. This implies that the set $\mathcal X_T$ is also localized: 
$$\mathcal{X}=\mathcal{X}_T^1\cap ... \cap \mathcal{X}_T^N,$$ 
where 
$$\hspace{-3mm}\mathcal{X}_T^i = \{[x_0]_{\textbf{in}_{i}(d)}\in\mathbb R^{[n]_i}: [\mathbf \Phi]_{\textbf{in}_{i}(d)} \begin{bmatrix} x_0^\intercal & \boldsymbol\delta^\intercal \end{bmatrix}_{\textbf{in}_{i}(2d)}\in \mathcal P^i\ \ \forall [\boldsymbol\delta]_{\textbf{in}_{i}(2d)} \in\mathcal W^{\textbf{in}_{i}(d)}\} $$
and $x \in \mathcal{X}_T$ if and only if $[x]_{\textbf{in}_{i}(d)}\in\mathcal{X}_T^i$ for all $i$. We will show in \S \ref{sec:implementation} that this allows for a localized and distributed computation of the terminal set $\mathcal{X}_T$.

\subsection{Stability guarantees}

Stability guarantees for the DLMPC problem \eqref{eqn:dlmpc} are given by the following lemma:

\begin{lemma}\label{lemm:stability}
	Consider system \eqref{eqn:system} subject to the MPC law \eqref{eqn:dlmpc}, where:
	\begin{enumerate}
		\item The cost $f$ is continuous and positive definite.
		\item The set $\mathcal P$ contains the origin and is closed.		
		\item $\mathcal{X}_{T}$ is defined by \eqref{eqn:terminal_set}.
		\item $f_{T}(x) = \text{inf}\ \{\eta\geq 0: \ x\in\eta\mathcal X_T\}.$
	\end{enumerate}
	Then, in the nominal setting, the origin is asymptotically stable with domain of attraction $\mathcal X$, and in the robust setting, $\mathcal X_T$ is input-to-state stable with domain of attraction $\mathcal X$.	
\end{lemma}

\begin{proof}
It suffices to show that these conditions immediately imply satisfaction of the necessary assumptions in \cite{mayne_constrained_2000}, together with the additional sufficient conditions of Theorem 4.2 in \cite{lofberg_minimax_2003}. These results state that if 
\begin{enumerate}[label=(\roman*)]
	\item $f$ and $f_T$ are continuous and positive definite, 
	\item $\mathcal X,\ \mathcal U,\ \mathcal X_T$ contain the origin and are closed, 
	\item $\mathcal X_T$ is control invariant, and 
	\item $\underset{u\in\mathcal U}{\text{min}}\ f(x,u)+f_T(Ax+Bu)-f_T(x)\leq0\ \forall x\in\mathcal X_T,$ 
\end{enumerate}
then the desired stability guarantees hold.
	
Condition \textit{1)} implies satisfaction of the part of (i) that concerns $f$. 
	
Condition \textit{2)} implies satisfaction of (ii). If $\mathcal P$ contains the origin and is closed, by definition this implies that $\mathcal X$ and $\mathcal U$ also contain the origin and are closed. Also, since $\mathcal X_T$ is defined in terms of $\mathcal P$ in \eqref{eqn:terminal_set}, $\mathcal X_T$ also contains the origin and is closed.
	
Condition \textit{3)} implies satisfaction of (iii) by virtue of Lemma \ref{lemm:terminal_set}.
	
Condition \textit{4)} implies that $f_T$ is a Lyapunov function on $\{x\in\mathbb R^{n}:\ 1\leq f_T(x)\}\supseteq\mathcal X_T$ since it is the Minkowski functional of the terminal set $\mathcal X_T$ (see Theorem 3.3. in \cite{blanchini_set_1999}). Therefore, the condition stated in (iv) is automatically satisfied for all $x\in\mathcal X_T$. Moreover, $f_T$ is necessarily positive definite, so the part of (i) that concerns $f_T$ is satisfied as well.
	
Therefore, by virtue of the results in \cite{mayne_constrained_2000} and \cite{lofberg_minimax_2003}, we guarantee asymptotic stability of the origin in the nominal setting and ISS of $\mathcal X_T$ in the robust setting, both with domain of attraction $\mathcal X$.	

\end{proof}

For any cost $f$ and constraint $\mathcal P$ that satisfy conditions \textit{1)} and \textit{2)} of Lemma \ref{lemm:stability}, we can choose an appropriate terminal set $\mathcal X_T$ and terminal cost $f_T$ as per \textit{3)} and \textit{4)} to satisfy the lemma. This guarantees stability for the DLMPC problem \eqref{eqn:dlmpc}. However -- as stated, $f_T$ does not satisfy Assumption \ref{assump:locality}, and therefore it is not localized. In particular, $f_T$ can be written as: 
\begin{equation}\label{eqn:terminal_cost}
f_T (x) = \underset{\eta}{\text{inf}}\ \{\eta\geq 0: \ [x]_{\textbf{in}_i(d)}\in\eta\mathcal X^i_T\ \forall i\},
\end{equation}
which cannot be written as a sum of local functions. Nonetheless, this scalar objective function admits a distributed and localized implementation -- we can add it in the DLMPC algorithm using the ADMM-based consensus technique described in \S \ref{sec:implementation}.

\subsection{Convergence guarantees}

Algorithm \ref{alg:dlmpc} relies on ADMM. We can guarantee convergence of the overall algorithm by leveraging the ADMM convergence result from \cite{boyd_distributed_2010}.

\begin{lemma}\label{lemm:convergence}

In Algorithm \ref{alg:dlmpc}, the residue, objective function, and dual variable converge as $k\rightarrow \infty$ i.e.
 $$\tilde{H}\mathbf{\tilde\Phi}^k-\mathbf{\tilde\Psi}^k \rightarrow 0, ~ f(\mathbf{\tilde\Phi}^k x_0)\rightarrow f(\mathbf{\tilde\Phi}^* x_0), ~ \mathbf{\tilde\Psi}^k\rightarrow\mathbf{\tilde\Psi}^*,$$
 where $\star$ indicates optimal value.

\end{lemma}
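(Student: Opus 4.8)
The plan is to recognize that Algorithm~\ref{alg:dlmpc} is precisely the standard two-block (scaled) ADMM iteration applied to problem~\eqref{eqn:dlmpc_admm}, and then to invoke the convergence guarantees of \cite{boyd_distributed_2010} directly. First I would rewrite \eqref{eqn:dlmpc_admm} in the canonical ADMM form
$$\min_{\mathbf{\tilde\Phi},\mathbf{\tilde\Psi}}\ g(\mathbf{\tilde\Phi}) + h(\mathbf{\tilde\Psi})\quad\text{s.t.}\quad \mathbf{\tilde\Phi} = \tilde H\mathbf{\tilde\Psi},$$
where $g$ collects the cost $f(M_1\mathbf{\tilde\Phi}\{1\}x_0)$ together with the indicator function of the set $\{\mathbf{\tilde\Phi}\in\mathcal L_d:\ \mathbf{\tilde\Phi}x_0\in\mathcal{\tilde P}\}$, and $h$ is the indicator function of $\{\mathbf{\tilde\Psi}\in\mathcal L_d:\ Z_{AB}M_2\mathbf{\tilde\Psi}=I,\ x_0=x(\tau)\}$. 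The coupling constraint $\mathbf{\tilde\Phi}=\tilde H\mathbf{\tilde\Psi}$ plays the role of the ADMM consensus constraint. With this identification, steps 3, 5, and 7 of Algorithm~\ref{alg:dlmpc} (i.e. (11a)--(11c) in \cite{amoalonso_implementation_2021}) are exactly the $\mathbf{\tilde\Phi}$-update, the $\mathbf{\tilde\Psi}$-update, and the scaled dual-variable update of ADMM.

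Next I would verify the two hypotheses required by the convergence theorem in \cite{boyd_distributed_2010}. The first is that $g$ and $h$ are closed, proper, and convex extended-real-valued functions. This follows because $f$ is assumed closed, proper, and convex, because the constraint sets $\mathcal P$, $\mathcal W$, and $\mathcal L_d$ (and hence $\mathcal{\tilde P}$ and the affine set defining $h$) are closed and convex, and because indicator functions of nonempty closed convex sets are closed, proper, and convex; affine pre-composition and intersection preserve these properties. The second hypothesis is that the unaugmented Lagrangian admits a saddle point. I would establish this from feasibility of the DLMPC problem---guaranteed on the domain of attraction $\mathcal X$ by Lemma~\ref{lemm:terminal_set}---together with convexity, which yields an optimal primal pair $(\mathbf{\tilde\Phi}^\star,\mathbf{\tilde\Psi}^\star)$ and an associated optimal multiplier, hence a saddle point.

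Given these two hypotheses, the results of \cite{boyd_distributed_2010} apply verbatim and deliver exactly the three claimed limits: residual (primal-feasibility) convergence $\tilde H\mathbf{\tilde\Phi}^k-\mathbf{\tilde\Psi}^k\to 0$, objective convergence $f(\mathbf{\tilde\Phi}^k x_0)\to f(\mathbf{\tilde\Phi}^\star x_0)$, and convergence of the second-block/dual iterate $\mathbf{\tilde\Psi}^k\to\mathbf{\tilde\Psi}^\star$. The main obstacle I anticipate is the saddle-point condition: the Boyd et al. result gives residual and objective convergence under mere closed-proper-convexity, but upgrading to convergence of the iterate $\mathbf{\tilde\Psi}^k$ to an optimizer requires existence of a saddle point, which in turn needs a nonempty feasible set (so that $g$ and $h$ are proper and an optimizer exists) and a suitable constraint qualification. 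I would therefore argue carefully that the locality constraints $\mathcal L_d$, the affine SLS constraint $Z_{AB}M_2\mathbf{\tilde\Psi}=I$, and the consensus constraint $\mathbf{\tilde\Phi}=\tilde H\mathbf{\tilde\Psi}$ are mutually consistent at each solve---precisely the nonemptiness guaranteed by the recursive-feasibility argument of Lemma~\ref{lemm:terminal_set}---so that the hypotheses of the cited ADMM theorem are met.
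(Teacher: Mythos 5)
Your proposal is correct and takes essentially the same route as the paper: both cast Algorithm~\ref{alg:dlmpc} as two-block ADMM applied to \eqref{eqn:dlmpc_admm} and invoke the convergence theorem of \cite{boyd_distributed_2010} after checking (i) that the objective functionals are closed, proper, and convex, and (ii) that the unaugmented Lagrangian has a saddle point, the latter deduced from feasibility via Lemma~\ref{lemm:terminal_set} together with a constraint qualification (the paper makes this explicit via Slater's condition, which your closing paragraph correctly identifies as the needed ingredient beyond mere feasibility plus convexity). The only cosmetic difference is bookkeeping: the paper folds the $\mathbf{\tilde\Psi}$-side affine constraint into a single extended-real-valued functional of $\mathbf{\tilde\Phi}$ using the left inverse $\tilde{H}^\dagger$, whereas you keep the canonical $g(\mathbf{\tilde\Phi})+h(\mathbf{\tilde\Psi})$ splitting; the two formulations are equivalent.
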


\begin{proof}
Algorithm \ref{alg:dlmpc} is the result of applying ADMM to the DLMPC problem \eqref{eqn:dlmpc}, then exploiting the separability and structure of resulting sub-problems to achieve distributed and localized implementation, as presented in \cite{amoalonso_implementation_2021}. Thus, to prove convergence, we only need to show that the underlying ADMM algorithm converges. By the ADMM convergence result in \cite{boyd_distributed_2010}, the desired convergence of the residue, objective function, and dual variable are guaranteed if
\textit{
\begin{enumerate}
	\item The extended-real-value functional for the algorithm is closed, proper, and convex, and
	\item The unaugmented Lagrangian for the algorithm has a saddle point.
\end{enumerate} 
}

We first show \textit{1)}. The extended-real-value functional $h(\mathbf{\tilde\Phi})$ is defined for this algorithm as
\begin{equation*}
    h(\mathbf{\tilde\Phi})=\begin{cases}
    f(M_1\mathbf{\tilde\Phi}\{1\}x_0) &\text{if $Z_{AB}M_2\tilde{H}^\dagger\mathbf{\tilde\Phi}=I,\ $} \\
    ~ &\mathbf{\tilde\Phi}\in\mathcal L_d, \mathbf{\tilde\Phi}x_0\in\mathcal{\tilde P},\\
    \infty &\text{otherwise}. \end{cases}
\end{equation*}

\noindent where $\tilde{H}^\dagger$ is the left inverse of $\tilde{H}$ from \eqref{eqn:dlmpc_admm}; $\tilde{H}$ has full column rank. When formulating the DLMPC problem \eqref{eqn:dlmpc} as an ADMM problem, we perform variable duplication to obtain problem \eqref{eqn:dlmpc_admm}. We can write \eqref{eqn:dlmpc_admm} in terms of $h(\mathbf{\tilde\Phi})$ with the constraint $\mathbf{\tilde\Phi}=\tilde H\mathbf{\tilde\Psi}$:
\begin{equation*}
\begin{aligned}
& \underset{\mathbf{\tilde\Phi},\mathbf{\tilde\Psi}}{\text{min}} && h(\mathbf{\tilde\Phi}) \ \text{s.t.} &    &\mathbf{\tilde\Phi}=\tilde H\mathbf{\tilde\Psi}.
\end{aligned}
\end{equation*}

By assumption, $f(M_1\mathbf{\tilde\Phi}x_0)$ is closed, proper, and convex, and $\mathcal{\tilde P}$ is a closed and convex set. The remaining constraints $Z_{AB}\mathbf{\tilde\Phi}=I$ and $\mathbf{\tilde\Phi}\in\mathcal{L}_{d}$ are also closed and convex. Hence, $h(\mathbf{\tilde\Phi})$ is closed, proper, and convex.

We now show \textit{2)}. This condition is equivalent to showing that strong duality holds \cite{boyd_convex_2004}. Since problem \eqref{eqn:dlmpc} is assumed to have a feasible solution in the relative interior of $\mathcal{P}$ by means of Lemma \ref{lemm:terminal_set} (given that the first iteration is feasible), 
Slater's condition is automatically satisfied, and therefore the unaugmented Lagrangian of the problem has a saddle point.

Both conditions of the ADMM convergence result from \cite{boyd_distributed_2010} are satisfied -- Algorithm \ref{alg:dlmpc} converges in residue, objective function, and dual variable, as desired.

\end{proof}

\section{Feasible and Stable DLMPC}\label{sec:implementation}

We incorporate theoretical results from \S \ref{sec:guarantees} into the DLMPC computation. We provide an algorithm to compute the terminal set $\mathcal X_T$. We also provide a distributed and localized computation for the terminal cost function $f_T$. The terminal set and cost generally introduce local coupling among subsystems; this minimizes conservatism but requires an extension to Algorithm \ref{alg:dlmpc} to accommodate coupling. All algorithms provided are distributed and localized, with computational complexity that is independent of the global system size.

\subsection{Offline synthesis of the terminal set $\mathcal X_T$}

Our first result is to provide an offline algorithm to compute terminal set $\mathcal X_T$ from \eqref{eqn:terminal_set} in a distributed and localized manner. As discussed in \S \ref{sec:guarantees}, we compute the maximal robust positive invariant set for the unconstrained localized closed-loop system. We use SLS-based techniques to obtain a localized closed-loop map $\mathbf{\Phi}$.

Our algorithm is based on Algorithm 10.4 from \cite{borrelli_predictive_2017}. The advantage of implementing this algorithm in terms of the localized closed-loop map $\mathbf{\Phi}$ is twofold: i) we can work with locally-bounded and polytopic disturbance sets $\mathcal{W}$ by leveraging Lemmas 1 and 2 in \cite{amoalonso_implementation_2021}, and ii) the resulting robust invariant set is automatically localized given the localized structure of the closed-loop map.

We start by finding a localized closed-loop map $\mathbf{\Phi}$ for system \eqref{eqn:system}. To do this, we need to solve
\begin{equation} \label{eqn:sls_unconstrained}
\underset{\mathbf{\Phi}}{\text{min }} f(\mathbf{\Phi}) \quad \text{~s.t. } Z_{AB}\mathbf{\Phi}=I, ~ \mathbf{\Phi}\in\mathcal L_d.
\end{equation}
This is an SLS problem with a separable structure for most standard cost functions $f$ \cite{anderson_system_2019}. The separable structure admits localized and distributed computation. Even when separability is not apparent, a relaxation can often be found that allows for distributed computation (see for example \cite{amo_alonso_distributedLQR_2020,wang_large-scale_2021}). The infinite-horizon solution for quadratic cost is presented in \cite{yu_localized_2020}. For other costs, computing an infinite horizon solution to \eqref{eqn:sls_unconstrained} remains an open question -- in these cases, we can use finite impulse response SLS with sufficiently long time horizon.

Once a localized closed-map $\mathbf{\Phi}$ has been found, we can compute the associated maximal positive invariant set. In the robust case, we leverage results from Lemmas 1 and 2 in our companion paper \cite{amoalonso_implementation_2021}, which use duality arguments to tackle specific formulations of $\mathcal{W}$. We denote $\sigma$ as the upper bound of $\left\Vert [\mathbf\delta]_i \right\Vert_*$ for all $i$, where $\left\Vert \cdot \right\Vert_*$ is the dual norm of $\left\Vert \cdot \right\Vert_p$. Also, Each $e_j$ is the $j^{th}$ vector in the standard basis.
\begin{itemize}
\item Nominal (i.e. no disturbance):
\begin{align*}
\hspace{-6mm}\mathcal S:=&\{x_0\in\mathbb{R}^n:\ \mathbf\Phi\{1\} x_0\in\mathcal P\}.
\end{align*}
\item Locally bounded disturbance:
$$\mathcal S:=\{x_0\in\mathbb{R}^n:\ [H]_i [\mathbf\Phi\{1\}]_i [x_0]_i\ + 
\sum_j\sigma \left\Vert e_j^\intercal [H]_i [\mathbf\Phi\{2:T\}]_i \right\Vert_* \leq[h]_i\; \forall i\}.$$
\item Polytopic disturbance:
\begin{multline*}
\mathcal S:=\{x_0\in\mathbb{R}^n:\ H\mathbf\Phi\{1\}x_0+\Xi g \leq h, \text{where }\Xi(j,:) =\ \underset{\Xi_j\geq0}{\text{min}}\ \Xi_j g \ \text{s.t.}\ H(j,:)\mathbf\Phi\{2:T\}=\Xi_j G\ \forall j\}.
\end{multline*}
\end{itemize}

As per Lemma \ref{lemm:terminal_set}, we calculate $\mathcal{S}$ by iteratively computing $\mathcal{S}_{k+1}$ from $\mathcal{S}_k$.\footnote{For simplicity of presentation, we write out $\mathcal{S}$ only for finite-horizon $\mathbf{\Phi}$; the proposed algorithm to synthesize $\mathcal{S}$ works for infinite-horizon $\mathbf{\Phi}$ as well.} Assume $\mathcal{S}_k$ can be written as $$\mathcal{S}_k = \{x_0 \in \mathbb{R}^n:\ \hat{H}x \leq \hat{h} \}. $$

Then, we can write $\mathcal{S}_{k+1}$ as
\begin{equation} \label{eqn:precursor_set}
\begin{aligned} 
\mathcal{S}_{k+1} = \{x_0 \in \mathbb{R}^n:\ & \Phi_{x,1}[1]x_0 \in \mathcal{F}(\mathcal{S}_k), \Phi_{u,0}[0]x_0 \in \mathcal{U} \},
\end{aligned}
\end{equation}
where $\mathcal{F}(\mathcal{S}_k) := \mathcal{S}_k$ in the nominal case. In the case of locally bounded disturbance,
\begin{equation*}
\begin{aligned}
\mathcal{F}(\mathcal{S}_k) := \{& x \in \mathbb{R}^n:\ [\hat{H}]_i[x]_i\ + \sum_j\sigma \left\Vert e_j^\intercal [\hat{H}]_i \right\Vert_* \leq[\hat{h}]_i\; \forall i\},
\end{aligned}
\end{equation*}
and for polytopic disturbance,
$$\mathcal{F}(\mathcal{S}_k) := \{x\in \mathbb{R}^n:\ \hat{H}x + \Xi g\leq \hat{h}, \text{where } \Xi(j,:) =\ \underset{\Xi_j\geq0}{\text{min}}\ \Xi_j g \ \text{s.t.}\ \hat{H}(j,:)=\Xi_j G\ \forall j\}.$$

These formulations use the simplifying fact that $\Phi_{x,0}[0] = I$ for all feasible closed-loop dynamics. Also, notice that by using $\mathcal{S}_k$ to calculate $\mathcal{S}_{k+1}$, the only elements of $\mathbf{\Phi}$ that we require are $\Phi_{x,1}[1]$ and $\Phi_{u,0}[0]$.

The conditions stated in \eqref{eqn:precursor_set} are row- and column-wise separable in $\mathbf \Phi$ (and $\Xi$)\footnote{A formal definition or row and column-wise separability is given in \cite{amoalonso_implementation_2021}.}; this allows us to calculate $\mathcal{S}_k$ in a distributed way. Furthermore, $\mathcal{F}(\mathcal{S}_k)$ and $\mathcal{U}$ are localizable by Assumption \ref{assump:locality}, in both nominal and robust settings. Since $\mathbf{\Phi}$ is also localized, we can exploit the structure of $\mathbf \Phi$ to and rewrite $S_{k+1}$ as the intersection of local sets, i.e. $S_{k+1} = S_{k+1}^1 \cap \dots \cap S_{k+1}^N$, where
\begin{equation}\label{eqn:precursor_local}
\begin{aligned} 
\mathcal{S}_{k+1}^i = \{& [x_0]_{\mathbf{in}_i(d)} \in \mathbb R^{[n]_i}:\ \ [\Phi_{x,1}[1]]_{\mathbf{in}_i(d)}[x_0]_{\mathbf{in}_i(d)} \in \mathcal{F}(\mathcal{S}_k^{\mathbf{in}_i(d)}), \ 
[\Phi_{x,1}[1]]_{\mathbf{in}_i(d)}[x_0]_{\mathbf{in}_i(d)} \in \mathcal{U}^{\mathbf{in}_i(d)} \}
\end{aligned}
\end{equation}

We now present Algorithm \ref{alg:terminal_set} to compute the terminal set $\mathcal X_{T} := \mathcal{S}$ in a distributed and local manner. Each subsystem $i$ computes its own local terminal set $\mathcal{S}^i$ using only local information exchange. This algorithm is inspired by Algorithm 10.4 in \cite{borrelli_predictive_2017}.

\setlength{\textfloatsep}{0pt}
		\begin{algorithm}[ht]
		\caption{Subsystem $i$ terminal set computation}\label{alg:terminal_set}
		\begin{algorithmic}[1]
		\Statex \textbf{input:} $[\Phi_{x,1}[1]]_{\mathbf{in}_i(d)}, [\Phi_{u,0}[0]]_{\mathbf{in}_i(d)},\mathcal{U}^{\mathbf{in}_i(d)}$
		\Statex \quad for polytopic noise, also $[G]_{\mathbf{in}_i(d)}, [g]_{\mathbf{in}_i(d)}$
		\State $\mathcal{S}_0^i \leftarrow \mathcal{X}^i, k\leftarrow -1$
		\State \textbf{repeat:} 
		\State \quad $k\leftarrow k+1$
		\State $\quad$Share $\mathcal S_{k}^i $ with $\textbf{out}_{i}(d)$ .
		\Statex $\quad$Receive $\mathcal S_{k}^j$ from all $j\in\textbf{in}_{i}(d)$.
		\State $\quad$Compute $\mathcal S_{k+1}^i$ via \eqref{eqn:precursor_local}.
		\State $\quad\mathcal S_{k+1}^i \leftarrow \mathcal S_{k}^i \cap \mathcal S_{k+1}^i $
		\State \textbf{until:} $\mathcal S_{k+1}^i  = \mathcal S_{k}^{i}$ for all $i$
		\State $\mathcal{X}_{T}^{i} \leftarrow \mathcal S_{k}^i $ 
		\Statex \textbf{output:} $\mathcal X^i_T$
		\end{algorithmic}
		\end{algorithm}

If state and input constraints $\mathcal{X}$ and $\mathcal{U}$ do not induce coupling (as assumed in Algorithm \ref{alg:dlmpc}), the resulting terminal set $\mathcal{X}_T$ will be at most $d$-localized. If $\mathcal{X}$ and $\mathcal{U}$ do induce $d$-local coupling, the terminal set will be at most $2d$-localized since -- in the presence of coupling --  Alg. \ref{alg:terminal_set} requieres communication with not only local patch neighbors, but also neighbors of those neighbors. Convergence is guaranteed for system \eqref{eqn:system} if it is stable when $u=0,\ w=0$, and when the constraint and disturbance sets $\mathcal X,\ \mathcal U,\ \mathcal W$ are bounded and contain the origin, as per \cite{gilbert_linear_1991}.

\subsection{Online synthesis of DLMPC}

DLMPC Algorithm \ref{alg:dlmpc} was developed under the assumption that no coupling is introduced through cost or constraints. We now extend this algorithm to accommodate local coupling, which is allowed as per Assumption \ref{assump:locality}. This allows us to incorporate the terminal set and cost introduced in the previous sections, both of which generally induce local coupling. We will use notation that assumes all constraints and costs (including the terminal set) are $d$-localized. In the case that the terminal set is $2d$-localized, subsystems will need to exchange information with neighbors up to $2d$-hops away; simply replace \{ $\textbf{out}_{i}(d)$, $\textbf{in}_{i}(d)$ \} with \{ $\textbf{out}_{i}(2d)$, $\textbf{in}_{i}(2d)$ \} wherever they appear in Algorithms \ref{alg:dlmpc}, \ref{alg:terminal_set} and \ref{alg:consensus}.

Appending the terminal set \eqref{eqn:terminal_set} and terminal cost \eqref{eqn:terminal_cost} to the DLMPC problem \eqref{eqn:dlmpc_admm}, gives:
\begin{align}
&\underset{\mathbf{\tilde\Phi},\mathbf{\tilde\Psi},\eta}{\text{min}} & f(M_1\mathbf{\tilde \Phi}\{1\}x_{0}) + \eta  \label{eqn:dlmpc_terminal} & ~\\
& ~~~ \text{s.t.} & Z_{AB}M_2\mathbf{\tilde \Psi}=I, x_0 =& x(\tau), ~ \mathbf{\tilde\Phi},\mathbf{\tilde \Psi}\in\mathcal{L}_d, \nonumber \\
& ~ &\mathbf{\tilde\Phi}x_0\in\mathcal{\tilde P},\ \tilde H\mathbf{\tilde \Phi}=&\mathbf{\tilde\Psi}, \nonumber \\
& & M_1\mathbf{\tilde \Phi}_T\{1\}x_{0} \in\eta\mathcal X_T, & ~ 0\leq\eta\leq 1 \nonumber,
\end{align}
\noindent where $\mathbf{\tilde \Phi}_T$ represents block rows of $\mathbf{\tilde \Phi}$ that correspond to time horizon $T$ -- e.g., in the nominal setting, the last block row of $\mathbf{\Phi}_x$.

\begin{remark} In the robust setting, we incorporate the terminal constraint into $\tilde{\mathcal P}$ to ensure robust satisfaction of the terminal constraint. However, $\mathcal X_T$ still appears as nominal constraint in order to integrate the definition of the terminal cost \eqref{eqn:terminal_cost} into the DLMPC formulation \eqref{eqn:dlmpc}.
\end{remark}

In the original formulation \eqref{eqn:dlmpc_admm}, we assume no coupling; as a result, all expressions involving $\mathbf{\tilde\Phi}$ are row-separable, and all expressions involving $\mathbf{\tilde\Psi}$ are column-separable. If we allow local coupling as per Assumption \ref{assump:locality}, we lose row-separability; row-separability is also lost in the new formulation \eqref{eqn:dlmpc_terminal} due to local coupling in the terminal set. This is because without coupling, $[x]_i$ and $[u]_i$ (which correspond to a fixed set of rows in $\mathbf{\Phi}$), can only appear in $f^i$ and $\mathcal P^i$; thus, each row of $\mathbf{\Phi}$ (and $\mathbf{\tilde\Phi}$) is solved by exactly one subsystem in step 3 of Algorithm \ref{alg:dlmpc}. With coupling, $[x]_i$ and $[u]_i$ can appear in $f^j$ and $\mathcal P^j$ for any $j\in\textbf{in}_i(d)$; now, each row of $\mathbf{\Phi}$ is solved for by multiple local subsystems at once, and row-separability is lost. This only affects step 3 of Algorithm \ref{alg:dlmpc} (i.e. the row-wise problem); other steps remain unchanged. We write out the row-wise problem corresponding to \eqref{eqn:dlmpc_terminal}:
\begin{align}
& \underset{\mathbf{\tilde\Phi}}{\text{min}} & f(M_1\mathbf{\tilde \Phi}\{1\}x_{0}) + \eta +  g&( \mathbf{\tilde\Phi},\mathbf{\tilde\Psi}^k,\mathbf{\Lambda}^k)    \label{eqn:dlmpc_Phi} \\
& ~ \text{s.t.} & \mathbf{\tilde\Phi}\in\mathcal{\tilde P}\cap \mathcal L_d,\ M_1\mathbf{\tilde \Phi}_T\{1&\}x_{0} \in\eta\mathcal X_T, \nonumber \\
& & ~0\leq\eta\leq1, ~ x_0 = \ x(\tau), & \nonumber
\end{align}
where $ g( \mathbf{\tilde\Phi},\mathbf{\tilde\Psi},\mathbf{\Lambda}) = \frac{\rho}{2}\left\Vert \mathbf{\tilde\Phi} - \tilde H \mathbf{\tilde\Psi} + \mathbf{\Lambda} \right\Vert_F^2,$
and all other relevant variables and sets are defined as per equation (10) in our companion paper \cite{amoalonso_implementation_2021}.

Note that if $\mathcal P$ induces coupled linear inequality constraints, row-separability is maintained. Though $[x]_i$ and $[u]_i$ appear in multiple $\mathcal{P}_j$, this corresponds to distinct rows of $\mathbf\Omega$, each solved by one subsystem, as opposed to one row of $\mathbf\Phi$ that is solved for by multiple subsystems. A similar idea applies if coupling is induced by some quadratic cost, i.e. $f(\mathbf{\Phi}x_0) = \|C\mathbf{\Phi}x_0\|_F^2$ for some matrix $C$ which induces coupling. In this case, we can modify $\tilde H$ such that we enforce $\mathbf{\Phi} = C\mathbf{\mathbf{\Psi}}$, and rewrite the cost as $\|\mathbf{\Phi}x_0\|_F^2$; now, each row of $\mathbf\Phi$ is solved by exactly one subsystem, and row-separability is maintained.\footnote{We would also need to use $[\Psi_{u,0}[0]]_{i}$ for the algorithm output instead of $[\Phi_{u,0}[0]]_{i}$.} However, this technique does not apply to coupling in the general case; nor does it apply to the coupling induced by the terminal cost -- it is generally true that each row of $\mathbf{\tilde \Phi}$ will be need to be solved for by several subsystems. We introduce a new vector variable
$$\mathbf X := M_1\mathbf{\tilde \Phi}\{1\}x_{0}.$$
This variable facilitates consensus between subsystems who share the same row(s) of $\mathbf{\tilde \Phi}$. Each subsystem solves for components $[\mathbf X]_{\mathbf{in}_i(d)}$, and comes to a consensus with its neighboring subsystems on the value of these components. In the interest of efficiency, we directly enforce consensus on elements of $\mathbf X$ instead of enforcing consensus on rows of $\mathbf{\tilde \Phi}$. We introduce a similar variable for terminal cost $\eta$; we define vector $\boldsymbol \eta := \begin{bmatrix}\eta_1,\dots,\eta_N\end{bmatrix}$. Subsystem $i$ solves for $\eta_i$, which is its own copy of $\eta$. It then comes to a consensus on this value with its neighbors, i.e. $\eta_j$ has the same value $\forall j \in\textbf{in}_i(d)$. Assuming that $\mathcal{G}_{(A,B)}$ is connected, this guarantees that $\eta_i$ has the same value $\forall i \in \{1 \dots N \}$, i.e., all subsystems agree on the value of $\eta$. We combine these two consensus-facilitating variables into the augmented vector variable
$$\mathbf{\tilde X} := \begin{bmatrix} \mathbf X^\intercal & \boldsymbol \eta^\intercal \end{bmatrix}^\intercal.$$
With this setup, we follow a variable duplication strategy and apply ADMM for consensus \cite{costantini_decomposition_2018}. In particular, we duplicate variables so each subsystem has \emph{its own copy} of the components of $\mathbf{\tilde X}$. Problem \eqref{eqn:dlmpc_Phi} becomes:
\begin{align}
& \underset{\mathbf{\tilde\Phi},\mathbf{\tilde X},\mathbf{\tilde Y}}{\text{min}} & \tilde f(\mathbf{\tilde X}) + g( \mathbf{\tilde\Phi},\mathbf{\tilde\Psi}^k,\mathbf{\Lambda}^k )  \label{eqn:dlmpc_X} \\
&~~ \text{s.t.} & \mathbf{\tilde\Phi},\mathbf{\tilde X}\in\mathcal{Q},\ \mathbf{\tilde X} \in\mathcal {\tilde X}_T, ~&\mathbf{\tilde\Phi}\in\mathcal{L}_d, ~x_0 = \ x(\tau), \nonumber \\
& & [\mathbf{\tilde{X}}]_i=[M_1\mathbf{\tilde\Phi}]_{i_r}[x_0]_i,~ &[\mathbf{\tilde{X}}]_j=[\mathbf{\tilde{Y}}]_j\ \forall j\in\mathbf{in}_i(d) \ \forall i, \nonumber
\end{align}
where we define $\tilde f$, $\mathcal {\tilde X}_T$ and $\mathcal Q$ as follows:
\begin{itemize}
\item $\tilde f(\mathbf{\tilde X}) := f(\mathbf X)+\frac{1}{N}\sum_{i=1}^N \eta_i,$
\item $\mathbf{\tilde X} \in\mathcal {\tilde X}_T$ if and only if $M_1\mathbf{\tilde \Phi}_T\{1\}x_{0} \in\eta_i\mathcal X_T\ \forall i$,
\item $\mathbf{\tilde\Phi},\mathbf{\tilde X}\in\mathcal{Q}$ if and only if $\mathbf 0 \leq \boldsymbol{\eta} \leq \mathbf 1$ \textbf{and} $\{ \mathbf{\tilde\Phi}\in\mathcal{\tilde P}$ if $\mathcal{\tilde P}$ induces linear inequalities \textbf{or} $\mathbf{\tilde X}\in\mathcal P$ otherwise$\}$.\footnote{By the assumptions in \cite{amoalonso_implementation_2021}, $\mathcal P$ (and therefore $\mathcal{\tilde{P}}$) must induce linear inequalities in the robust setting. The only case where  $\mathcal{\tilde{P}}$ may induce something other than linear inequalities is in the nominal setting.}
\end{itemize}

The structure of problem \eqref{eqn:dlmpc_X} allows us to solve it in a distributed and localized manner via ADMM-based consensus. In particular, each subsystem $i$ solves:
\begin{subequations}\label{eqn:consensus}
\begin{align}
	&\hspace{-2mm}\left\{\hspace{-2mm} \begin{array}{c} [\mathbf{\tilde\Phi}]_{i_{r},}^{k+1,n+1}\\ ~\\{} [\mathbf{\tilde X}]^{n+1}_{\mathbf{in}_i(d)} \end{array} \hspace{-2mm}\right\} \hspace{-1mm}=
			\hspace{-1mm}\left\{\begin{aligned}
			& \hspace{-7mm} \underset{ ~~~~~ \scriptscriptstyle{[\mathbf{\tilde \Phi}]_{i_r},[\mathbf{\tilde X}]_{\mathbf{in}_i(d)}}}{\text{argmin}} && \hspace{-6mm}\tilde g^{i}_k(\mathbf{\tilde X},\mathbf{\tilde \Phi}) \hspace{-1mm}+ \hspace{-1mm} \frac{\mu}{2}h^{i}(\mathbf{\tilde X},\mathbf{\tilde Y}^n,\mathbf{\tilde Z}^n)\\
			&~~~ \text{s.t.} &&  \hspace{-6mm}
				\begin{aligned}
     			 		  & [\mathbf{\tilde\Phi}]_{r_i}\hspace{-1mm}\in\hspace{-1mm}\mathcal{Q}^i,\ [\mathbf{\tilde X}_T]_i\hspace{-1mm} \in\hspace{-1mm}\mathcal {\tilde X}_T^i\cap \mathcal{Q}^i, \\
    					  & [\mathbf{\tilde{X}}]_i=[M_1\mathbf{\tilde\Phi}]_{r_i}[x_0]_i
			     	 \end{aligned}
		\end{aligned} \right\}\label{eqn:consensus_X}\\[-4pt]
	&[\mathbf{\tilde Y}]_{i}^{n+1} = \frac{h^{i}(\mathbf{\tilde X}^{n+1},\mathbf{0},\mathbf{\tilde Z}^n)}{\vert\textbf{in}_{i}(d)\vert},\label{eqn:consensus_Y} \\[5pt]
	& \mathbf{[\tilde Z]}_{ij}^{n+1} = \mathbf{[\tilde Z]}_{ij}^{n}+\mathbf{[\tilde X]}_{i}^{n+1}-\mathbf{[\tilde Y]}_{j}^{n+1}, \label{eqn:consensus_Z}
\end{align}
\end{subequations} 
where to simplify notation, we define
\begin{align*}
&\tilde g^{i}_k(\mathbf{\tilde X},\mathbf{\tilde \Phi}) := \tilde f^{i}([\mathbf{\tilde X}]_{\mathbf{in}_i(d)})+g\big([\mathbf{\Phi}]_{i_{r}}-[\mathbf{\Psi}]_{i_{r}}^{k}+[\mathbf{\Lambda}]_{i_{r}}^{k}\big), \\[3pt]
& h^i(\mathbf{\tilde X},\mathbf{\tilde Y}^n,\mathbf{\tilde Z}^n) := {\sum}_{j\in \textbf{in}_{i}(d)}\left\Vert[\mathbf{X}]_{j}-[\mathbf{Y}]_{i}^{n}+[\mathbf{Z}]_{ij}^{n}\right\Vert^{2}_{F},
 \end{align*}
Consensus iterations are denoted by $n$, outer-loop (i.e. Algorithm \ref{alg:dlmpc}) iterations are denoted by $k$, and $\mu$ is the ADMM consensus parameter. Intuitively, $\tilde g^{i}_k$ represents the original objective from \eqref{eqn:dlmpc_Phi}, and $h^i$ represents the consensus objective. 

The subroutine described by \eqref{eqn:consensus} allows us to accommodate local coupling induced by cost and constraint (including terminal), and can be implemented in a distributed and localized manner. As stated above, this subroutine solves the row-wise problem \eqref{eqn:dlmpc_Phi}, corresponding to step 3 of Algorithm \ref{alg:dlmpc}. Thus, in order to accommodate local coupling (including terminal set and cost), we need only to replace step 3 of Algorithm \ref{alg:dlmpc} with the subroutine defined by Algorithm \ref{alg:consensus} below. Convergence is guaranteed by a similar argument to Lemma \ref{lemm:convergence}.
 
\begin{algorithm}[h]
\caption{Subsystem $i$ implementation of step 3 in Algorithm \ref{alg:dlmpc} when subject to localized coupling}\label{alg:consensus}
\begin{algorithmic}[1]
\Statex \textbf{input:} tolerance parameters $\epsilon_x, \epsilon_z, \mu >0$.
\State $n\leftarrow0$. 
\State Solve optimization problem \eqref{eqn:consensus_X}.
\State Share $[\mathbf{X}]_{i}^{n+1}$ with $\textbf{out}_{i}(d)$. Receive the corresponding $[\mathbf{X}]_{j}^{n+1}$ from $j\in\textbf{in}_{i}(d)$.
\State Perform update \eqref{eqn:consensus_Y}.
\State Share $[\mathbf{Y}]_{i}^{n+1}$ with $\textbf{out}_{i}(d)$. Receive the corresponding $[\mathbf{Y}]_{j}^{n+1}$ from $j\in\textbf{in}_{i}(d)$.
\State Perform update \eqref{eqn:consensus_Z}.
\State \textbf{if} $\left\|[\mathbf{X}]^{n+1}_i - [\mathbf{Z}]^{n+1}_i\right\|_F<\epsilon_x$
\Statex and $\left\|[\mathbf{Z}]^{n+1}_i - [\mathbf{Z}]^{n}_i\right\|_F<\epsilon_z$ \textbf{:}
\Statex $\;\;$ Go to step 4 in Algorithm \ref{alg:dlmpc}. 
\Statex \textbf{else:}
\Statex $\;\;$ Set $n\leftarrow n+1$ and return to step 2.
\end{algorithmic}
\end{algorithm}

\textbf{Computational complexity of the algorithm:} 
The presence of coupling induces an increase in computational complexity compared to the uncoupled scenario (i.e., Algorithm \ref{alg:dlmpc}); however, the scalability properties from the uncoupled scenario still apply. Complexity in the current algorithm is determined by steps 2, 4, 6 of Algorithm \ref{alg:consensus} and steps 5 and 7 of Algorithm \ref{alg:dlmpc}. Except for step 2 in Algorithm \ref{alg:consensus}, all other steps can be solved in closed form. Sub-problems solved in step 2 require $O(d^{2}T^2)$ optimization variables in the robust setting ($O(dT^2)$ in the nominal setting) and $O(d^2T)$ constraints. All other steps enjoy less complexity since their evaluation reduces to multiplication of matrices of dimension $O(d^2T^2)$ in the robust setting, and $O(d^2T)$ in the nominal setting. The difference in complexity between the nominal and robust settings is consistent with the uncoupled scenario. Compared to the uncoupled scenario, additional computation burden is incurred by the consensus subroutine in Algorithm \ref{alg:consensus}, which increases the total number of iterations. The consensus subroutine also induces increased communication between subsystems, as it requires local information exchange. However, this exchange is limited to a $d$-local region, resulting in small consensus problems that converge quickly, as we illustrate empirically in \S \ref{sec:simulation}. As with the original uncoupled algorithm, complexity of this new algorithm is determined by the size of the local neighborhood and does not increase with the size of the global network.

\section{Simulation Experiments} \label{sec:simulation}

Using examples, we demonstrate how adding terminal constraint and cost affects the performance of the DLMPC algorithm. We verify that introducing terminal constraint and cost indeed produces the desired feasibility and stability properties. We additionally empirically characterize the computational complexity of algorithms presented in previous sections and verify that complexity is independent of global network size for both offline and online algorithms. Code to replicate these experiments is available at \texttt{\url{https://github.com/unstable-zeros/dl-mpc-sls}}; this code makes use of the SLS toolbox \cite{slstoolbox}, which includes ready-to-use MATLAB implementations of all algorithms presented in this paper and its companion paper \cite{amoalonso_implementation_2021}.

\subsection{System model}
Simulations are performed on the system from our companion paper \cite{amoalonso_implementation_2021}; a two-dimensional square mesh, where each node represents a two-state subsystem that follows linearized and discretized swing dynamics
\begin{equation*}
\begin{bmatrix} \theta(t+1) \\ \omega(t+1) \end{bmatrix}_i = \sum_{j\in\textbf{in}_i(1)}[A]_{ij} \begin{bmatrix} \theta(t) \\ \omega(t) \end{bmatrix}_j + [B]_{i}[u]_i + [w]_i,
\end{equation*}
where $[\theta]_i$, $[\dot{\theta}]_i$, $[u]_i$ are the phase angle deviation, frequency deviation, and control action of the controllable load of bus $i$. The dynamic matrices are 
\[[A]_{ii}=\begin{bmatrix}
   1  & \Delta t \\
  -\frac{k_i}{m_i}\Delta t &  1-\frac{d_i}{m_i}\Delta t
\end{bmatrix}, \ [A]_{ij}=\begin{bmatrix}
   0  & 0 \\
   \frac{k_{ij}}{m_i}\Delta t &  0
\end{bmatrix},\] 
and $[B]_{ii}=\begin{bmatrix} 0 & 1 \end{bmatrix}^\intercal$ for all $i$. 

Connectivity among nodes is determined at random; each node connects to each of its neighbors with a $40\%$ probability. The expected number of edges is $ 0.8*n*(n-1)$. The parameters in bus $i$: $m_i^{-1}$ (inertia inverse), $d_i$ (damping) and $k_{ij}$ (coupling term) are randomly generated and uniformly distributed between $[0,\ 2]$, $[0.5,\ 1]$, and $[1,\ 1.5]$ respectively. We set the discretization step $\Delta t = 0.2$, and define $k_i :=  \sum_{j\in\textbf{in}_i(1)} k_{ij}$.

We study both the nominal setting and robust setting with uniformly distributed polytopic noise. The baseline parameter values are $d=3,\ T=5,\ N=16$ ($4\times4$ grid). Unless otherwise specified, we start with a random-generated initial condition. We use a quadratic cost and polytopic constraints on both angle and frequency deviation, and impose upper and lower bounds.

\subsection{Performance}
The addition of the terminal set and cost to the DLMPC algorithm introduces minimal conservatism in both nominal and robust settings. We study the the DLMPC cost for varying values of the time horizons for three different cases: (i) without terminal set and cost (ii) with terminal set, (iii) with terminal set \emph{and} terminal cost. Results are summarized in Fig. \ref{fig:cost_vs_horizon}. Observe that the difference between the optimal cost across all three cases are negligible, indicating that our proposed terminal set and cost introduce no conservatism (while still providing the necessary theoretical guarantees).

\begin{figure}[htp]
	\centering
	\includegraphics[width=.75\columnwidth]{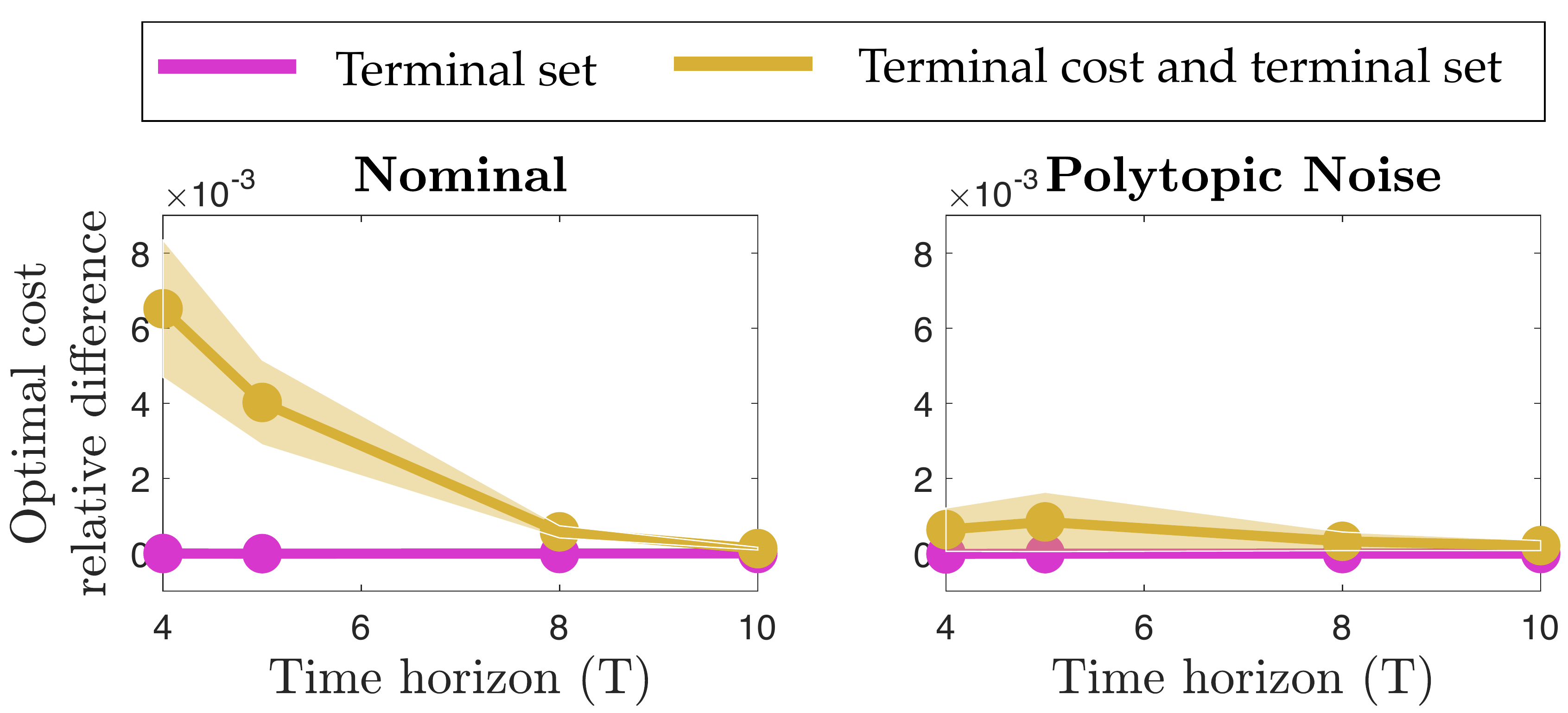}
	\caption{Relative difference of the optimal cost obtained (i) with terminal set (pink) and (ii) with both terminal set and cost (yellow) compared to the optimal cost computed without terminal set and cost. Relative difference is obtained by taking the difference of the two costs and normalizing by the non-terminal-constrained cost. The difference obtained by adding the terminal set (indicated in pink) is on the order of $10^{-5}$ and is not visible on the plot.}
	\label{fig:cost_vs_horizon}
\end{figure}

Generally, the inclusion of the terminal set and cost introduce minimal change. In the vast majority of cases (e.g. all simulations from \cite{amoalonso_implementation_2021}), the DLMPC algorithm is feasible and stable even without a terminal set. This phenomenon has already been observed for the centralized case \cite{borrelli_predictive_2017}. However, we also want to demonstrate how the terminal set and cost \textit{can} make a difference -- example subsystem trajectories for the three different cases are shown in Fig. \ref{fig:dynamics_noiseless} for the nominal case and in the presence of polytopic disturbances. For these simulations only, we use a smaller ($N=5$), more unstable ($m_i^{-1}$ between $[0,\ 16]$) system, extremely short time horizon ($T=2$), and somewhat hand-crafted initial states and disturbances to obtain clearly visible differences between cases -- without such instability, short time horizon, and hand-crafting, differences are generally tiny and not visible. In all cases, the centralized solution (computed via CVX \cite{cvx}) coincides with the solution achieved by the DLMPC Algorithm \ref{alg:dlmpc}, validating the optimality of the proposed algorithm. The effects of introducing terminal set and terminal cost are apparent and consistent with the theoretical results presented in this paper. 

\begin{figure}[htp]
    \centering
   	\includegraphics[width=.75\columnwidth]{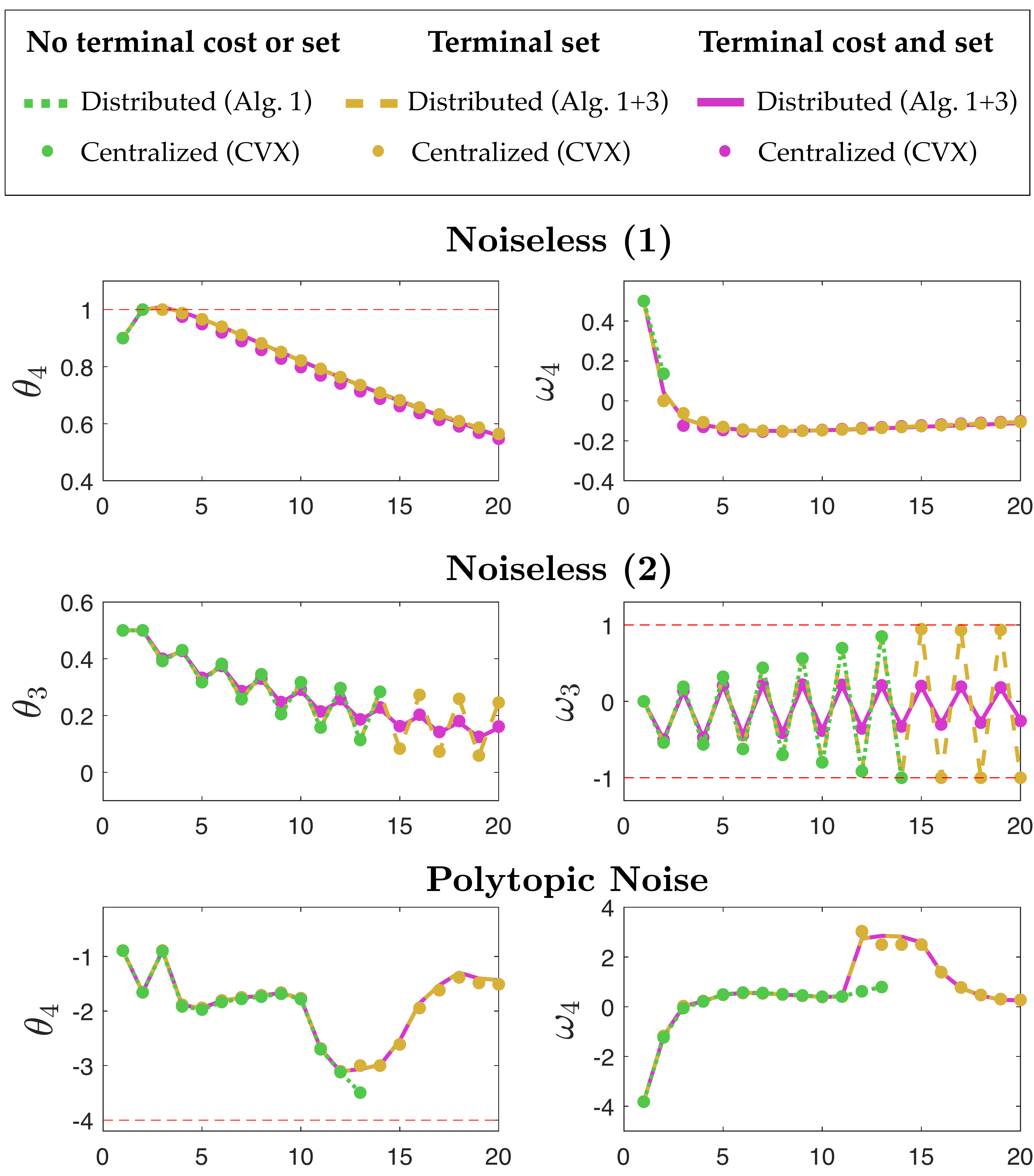}
	\caption{For both nominal cases, upper and lower bounds for all states are 1 and -1 respectively; indicated by red dashed lines in select plots. \textbf{Top (nominal)}: evolution of the states of subsystem $4$ under a DLMPC controller without terminal set and cost (green), with terminal set (pink), and with both terminal set and cost (yellow). Without a terminal set, the algorithm becomes infeasible after $t=2$. With a terminal set, the algorithm remains feasible. Notice the difference in $\omega_4$ for the trajectories at $t=2$. The addition of the terminal cost introduces little change. \textbf{Middle (nominal)}: evolution of the states of subsystem $3$ under different initial conditions. Infeasibility is encountered at $t=15$; here, the difference between the infeasible and feasible trajectories is not visible. Note the oscillations in $\omega_3$; adding a terminal set itself enables feasibility but gives large oscillations -- additionally including the terminal cost results in much smaller, decaying oscillations.
	\textbf{Bottom (polytopic)}: upper and lower bounds for $\theta$ states are -4 and 4, respectively; bounds for $\omega$ states are -20 and 20. Bounds are indicated by red dashed lines in select plots. Without a terminal set, the algorithm becomes infeasible after $t=13$; with a terminal set, feasibility is maintained. Notice the difference in trajectories with and without terminal set just before $t=13$. The addition of the terminal cost introduces little change.}
    \label{fig:dynamics_noiseless}
\end{figure}

\subsection{Computational complexity}
Simulations results verify the scalability of the proposed methods. We measure runtime\footnote{In online simulations, runtime is measured after the first iteration, so that all iterations for which runtime is measured are warm-started.} while varying different network and problem parameters: locality $d$, network size $N$, and time horizon $T$.\footnote{To increase network size, we vary the size of the grid over $4\times4$ ($32$ states), $6\times6$ ($72$ states), $8\times8$ ($128$ states), and $11\times11$ ($242$ states) grid sizes.} We run $5$ different simulations for each of the parameter combinations, using different realizations of the randomly chosen parameters to provide consistent runtime estimates.

First, we study the scalability of the offline Algorithm \ref{alg:terminal_set} to compute the terminal set; results are shown in Fig. \ref{fig:scalability_offline}. Consistent with theoretical analyses in \S \ref{sec:implementation}, runtime does not increase with the size of the network; rather, it increases with the size of the neighborhood. As expected, computations for the robust set take slightly longer than for the nominal set, since the variables in the robust setting have greater dimension. Overall, synthesis times for both nominal and robust settings are extremely low, especially when a small locality size is used.

\begin{figure}[htp]
    \centering
    \includegraphics[width=.7\columnwidth]{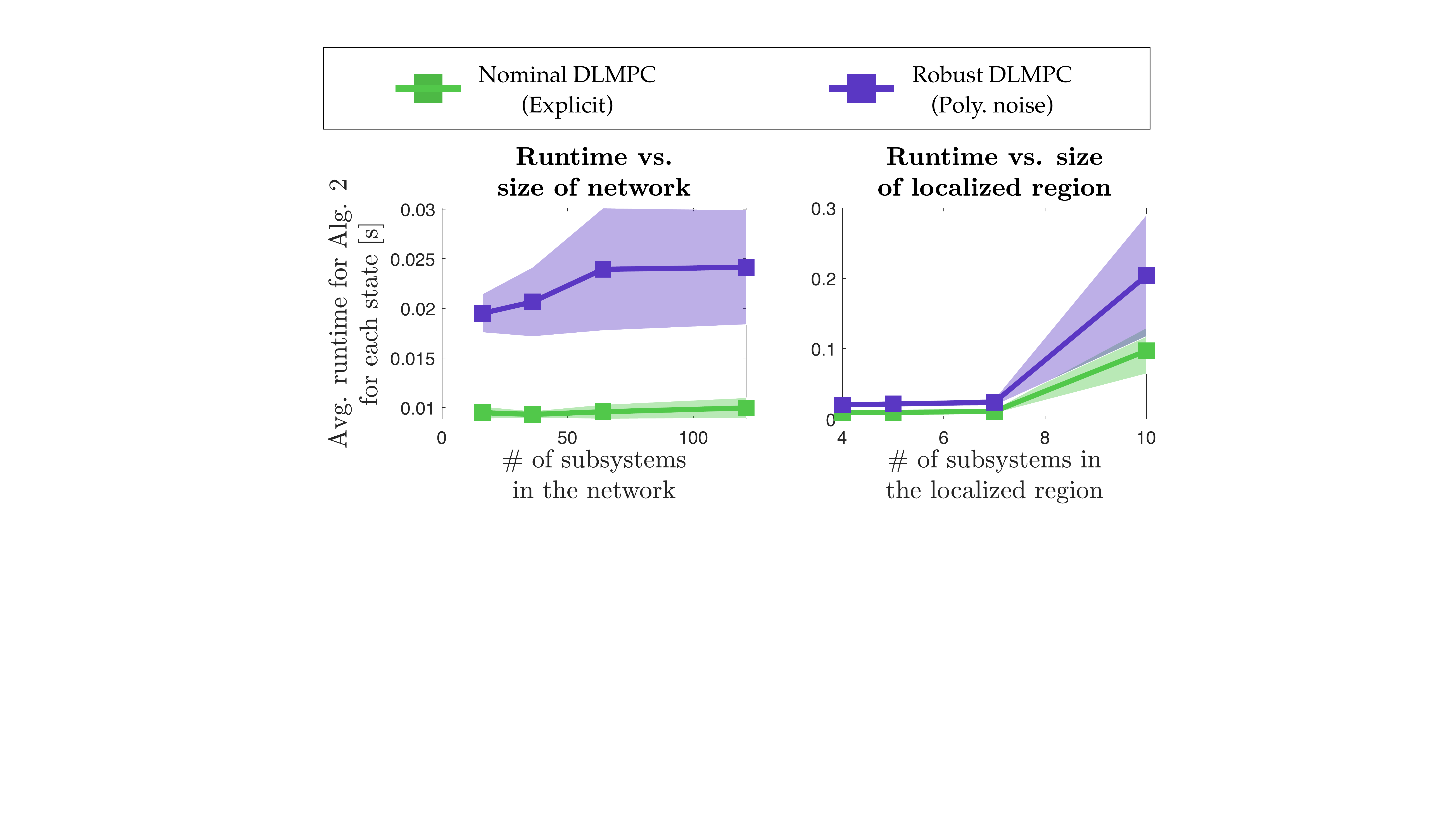}
    \caption{Average runtime of Algorithm \ref{alg:terminal_set} with network size (left) and locality parameter (right). The lines are the mean values and the shaded areas show the values within one standard deviation. Since computations are parallelized across subsystems, runtime is measured on a subsystem, normalized per state, and averaged after the algorithm computation is finished.}
    \label{fig:scalability_offline}
\end{figure}

We also study how scalability of the DLMPC algorithm is affected when we impose a terminal set and terminal cost, and use Algorithm \ref{alg:consensus} to handle coupling. Results are shown in Fig. \ref{fig:scalability_online}; this figure was generated using the same systems and parameters as Fig. 2 from our companion paper \cite{amoalonso_implementation_2021}, allowing for direct comparison of online runtimes. The addition of the terminal set/cost slightly increases runtime, as expected. In the nominal case, runtime is increased from about $10^{-3}$s to $10^{-1}$s. In the case of polytopic disturbances, runtime is increased from about $1-10$s to $10$s. Scalability is maintained; runtime barely increases with the size of the network. Overall, simulations indicate that the introduction of a terminal set and cost preserve scalability, minimally impact computational overhead and performance, and 
provide the desired guarantees.

\begin{figure}[htp]
    \centering
    \includegraphics[width=.5\columnwidth]{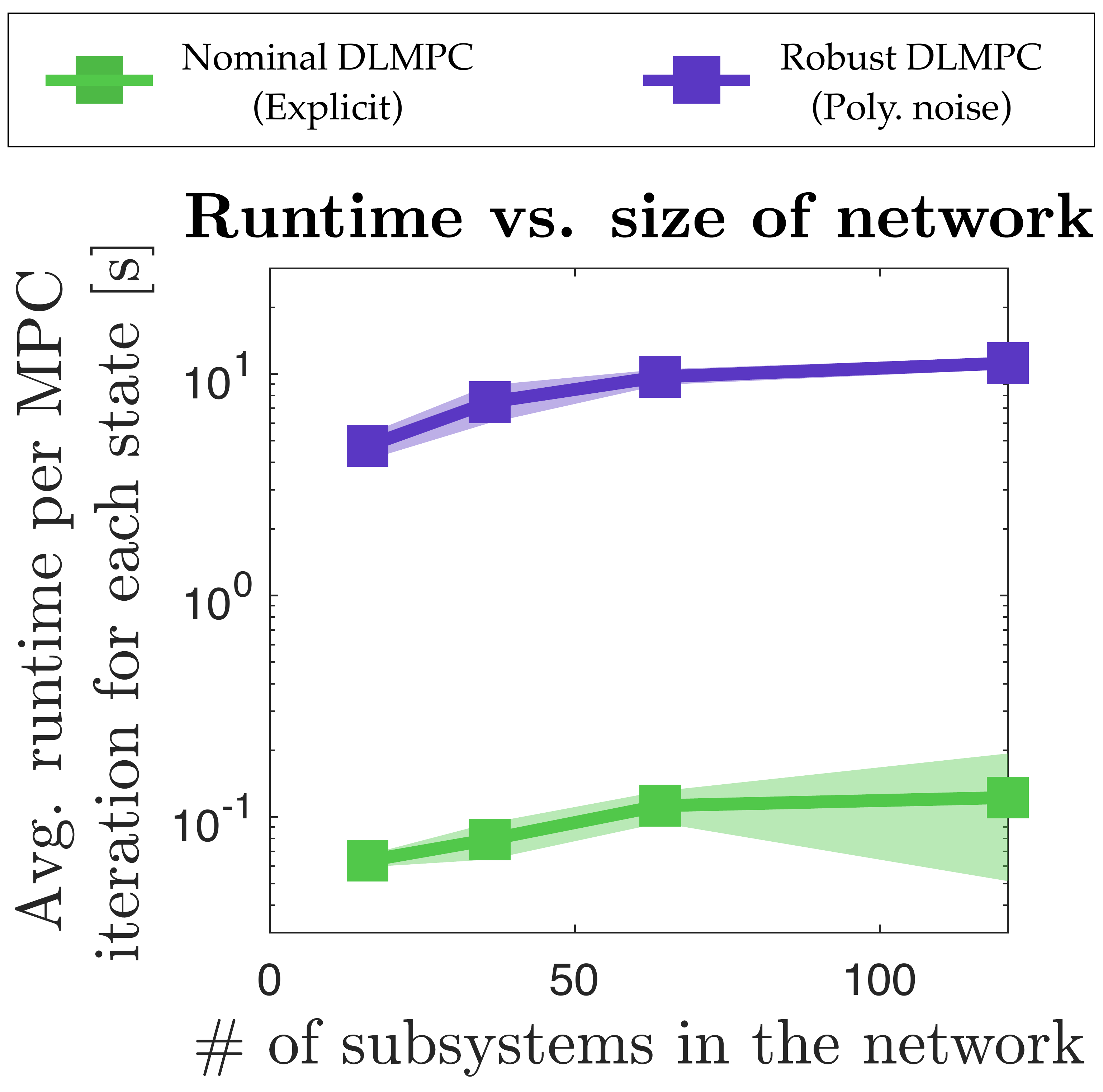}
    \caption{Average runtime per DLMPC iteration with network size when terminal set and terminal cost are imposed. The lines are the mean values and the shaded areas show the values within one standard deviation. Since computations are parallelized across subsystems, runtime is measured on a subsystem, normalized per state, and averaged out after the MPC controller is finished.}
    \label{fig:scalability_online}
\end{figure}

\vspace{3cm}
\section{Conclusion}

In this paper we provide theoretical guarantees for the \emph{closed-loop} DLMPC approach, in both nominal and robust settings. In particular, we ensure recursive feasibility and stability by incorporating a terminal set and terminal cost. We also give guarantees for convergence of the algorithm. For the terminal set, we choose the maximal robust positive invariant set, which can be expressed compactly in the SLS parametrization. We introduce an algorithm to scalably compute this terminal set. We also provide the requisite modifications to the online DLMPC algorithm to accommodate local coupling induced by the terminal set and cost, as well as general coupling induced by process cost and constraints. All algorithms require only local information exchange, and enjoy computational complexity that is independent of the global system size. The results presented in this paper are the first to provide a distributed and localized computation of the maximal robust positive invariant control set and Lyapunov function of a large-scale system. 



\section*{Acknowledgements}
Carmen Amo Alonso is partially supported by a Caltech/Amazon AI4Science fellowship. Jing Shuang (Lisa) Li is supported in part by a postgraduate scholarship from the Natural Sciences and Engineering Research Council of Canada [NSERC PGSD3-557385-2021]. Nikolai Matni is supported in part by NSF awards CPS-2038873 and CAREER award ECCS-2045834, and a Google Research Scholar award. James Anderson is partially supported by NSF CAREER ECCS-2144634 and DOE award DE-SC0022234. 


\bibliographystyle{IEEEtran}
\bibliography{references}

\end{document}